\documentclass[a4paper,12pt]{article}
\usepackage[mathscr]{eucal}
\usepackage{amsmath,amsfonts,amssymb,amsthm}
\usepackage{times}
\usepackage{hyperref}

\numberwithin{equation}{section} 

\newtheorem{thm}{Theorem}[section]

 \newtheorem{prop}[thm]{Proposition}
 \newtheorem{lemma}[thm]{Lemma}




\renewcommand{\tilde}{\widetilde}
\renewcommand{\hat}{\widehat}

\renewcommand{\simeq}{\cong}
\newcommand{\bref}[1]{\textbf{\ref{#1}}}

\newcommand{\p}[1]{|#1|}
\newcommand{\gh}[1]{\mathrm{gh}(#1)}


\newcommand{\dx}{\mathrm{d}_X}

\renewcommand{\d}{\partial}

\newcommand{\cF}{\mathcal{F}}

\newcommand{\tensor}{\otimes}

\newcommand{\binner}[2]{%
  {\langle}\kern-4.15pt{\langle}#1{,}\,#2{\rangle}\kern-4.15pt{\rangle}}
\newcommand{\commut}[2]{[#1{,}\,#2]}

\newcommand{\half}{\mathchoice{%
    \ffrac{1}{2}}{\frac{1}{2}}{\frac{1}{2}}{\frac{1}{2}}}

\newcommand{\ffrac}[2]{\raisebox{.5pt}%
  {\footnotesize$\displaystyle\frac{#1}{#2}$}\kern1pt}

\newcommand{\dl}[1]{\mathchoice{\ffrac{\d}{\d #1}}{\frac{\d}{\d #1}}{\ffrac{\d}{\d #1}}{\ffrac{\d}{\d #1}}}

\newcommand{\st}[2]{{\overset{#1}{#2}}}

\newcommand{\ddl}[2]{\ffrac{\d #1}{\d #2}}

\def\const{\mathop\mathrm{const}\nolimits}

\newcommand{\manifold}[1]{\mathscr{#1}}
\newcommand{\manX}{\manifold{X}}

\newcommand{\Liealg}{\mathfrak} 
\newcommand{\algg}{\Liealg{g}}

\newcommand{\algA}{\mathcal{A}}

\newcommand{\cC}{\mathcal{C}}

\newcommand{\fR}{\mathbb{R}}


 

\def\cK{\mathcal{K}}

 \def\cP{\mathcal{P}}
 \def\cT{\mathcal{T}}
 
 


\def\tr{{\rm Tr}}

\newcommand\blfootnote[1]{%
  \begingroup
  \renewcommand\thefootnote{}\footnote{#1}%
  \addtocounter{footnote}{-1}%
  \endgroup
}

\usepackage[top=2cm,bottom=2cm,left=2cm,right=2cm]{geometry}
\numberwithin{equation}{section}
\usepackage{slashed}
\usepackage{xcolor}
\usepackage{dsfont}
\usepackage{autobreak}
\usepackage{authblk}
\usepackage{cleveref}
\usepackage{lipsum}
\renewcommand\blfootnote[1]{%
  \begingroup
  \renewcommand\thefootnote{}\footnote{#1}%
  \addtocounter{footnote}{-1}%
  \endgroup
}

\newcommand{\overbar}[1]{\mkern 1.5mu\overline{\mkern-1.5mu#1\mkern-1.5mu}\mkern 1.5mu}
\renewcommand{\wedge}{}
\newcommand{\psibar}{\overbar{\psi}}
\newcommand{\thetabar}{\overbar{\theta}}

\newcommand{\sigmabar}{\overbar{\sigma}}

\newcommand{\D}{\mathcal{D}}

\newcommand{\rhobar}{\overbar{\rho}}
\newcommand{\xitilde}{\tilde{\xi}}
\newcommand{\thetatilde}{\tilde{\theta}}
\newcommand{\xibar}{\overbar{\xi}}

\newcommand{\atheta}{\theta}
\newcommand{\stheta}{\eta}
\newcommand{\astheta}{\zeta}

\newcommand{\alphadot}{{A^\prime}}
\newcommand{\betadot}{{B^\prime}}
\newcommand{\gammadot}{{C^\prime}}
\newcommand{\deltadot}{{D^\prime}}

\newcommand{\B}{{B}}
\newcommand{\C}{{C}}

\newcommand{\Aprime}{{A'}}
\newcommand{\Bprime}{{B'}}

\newcommand{\Q}{\mathcal{Q}}
\newcommand{\Qbar}{\overbar{\Q}}
\newcommand{\gbar}{\overbar{g}}

\newcommand{\sigmahat}{\hat{\sigma}}
\newcommand{\alga}{\mathfrak{a}}
\newcommand{\dP}{\mathrm{d_P}}
\newcommand{\dT}{\mathrm{d_T}}

\title{Presymplectic BV-AKSZ for $N=1$, $D=4$ Supergravity\\
\vspace{0.5cm}}

\author[1,$\dagger$,$\ddagger$]{~~~~Maxim Grigoriev }

\author[2]{Alexander Mamekin}

\affil[1]{\textsl{Service de Physique de l'Univers, Champs et Gravitation, \protect\\ Universit\'e de Mons, 20 place du Parc, 7000 Mons, 
Belgium \vspace{5pt}}}
 
\affil[2]{\textsl{ Institute for Theoretical and Mathematical Physics,\protect\\
  Lomonosov Moscow State University, 119991 Moscow, Russia  \vspace{5pt}}}

\date{}

\begin{document}
\maketitle
\begin{abstract}
We elaborate on the presymplectic BV-AKSZ approach to supersymmetric systems. In particular, we construct such a formulation for the $N=1$, $D=4$ supergravity by taking as a target space the Chevalley-Eilenberg complex of the super-Poincar\'e algebra which, as we demonstrate, admits an invariant presymplectic structure of degree $3$. This data encodes a full-scale Batalin-Vilkovisky formulation of the system, including a concise form of the BV master action. The important feature of (presymplectic) AKSZ models is that, at least at the level of equations of motion, they can be equivalently reformulated in the spacetime obtained by adding or eliminating contractible dimensions. For instance, the presymplectic AKSZ formulation of gravity can be lifted to the respective ``group manifold'', endowing it with the structure of a principle bundle and the Cartan connection therein, at least locally. In particular, the so-called rheonomy conditions emerge as a part of the presymplectic AKSZ equations of motion. The analogous considerations apply to supergravity and its uplift to superspace.  We also study general presymplectic BV-AKSZ models related by adding or removing contractible spacetime dimensions in order to systematically relate the spacetime and superspace formulations of the same system within the AKSZ-like framework. These relations are then illustrated using the supersymmetric particle as a toy model. 

\end{abstract}

\vfill

\blfootnote{${}^{\dagger}$ Supported by the ULYSSE Incentive
Grant for Mobility in Scientific Research [MISU] F.6003.24, F.R.S.-FNRS, Belgium.}

\blfootnote{${}^{\ddagger}$ Also at Lebedev Physical Institute Moscow, Russia}

\newpage

\tableofcontents

\section{Introduction}

Batalin-Vilkovisky formalism~\cite{Batalin:1981jr,Batalin:1983wj} gives a very general and systematic framework for gauge theories both at classical and quantum level. The price to be paid for the universality of this approach is that in applications to concrete local gauge theories the geometrical structures underlying gauge transformations, equations of motion, global symmetries, and so on are somewhat hidden in the BV field-antifield space involving a bunch of auxiliary variables such as ghosts, ghost for ghosts, and their conjugated antifields.

In the case of topological theories the BV formulation can be made very concise and geometrical by employing the celebrated  AKSZ~\cite{Alexandrov:1995kv} construction. In so doing the BV master action and symplectic structure are encoded in the geometrical structure of the typically finite-dimensional target space while  physical fields, ghost fields, antifields, etc. emerge as different components of the inhomogeneous differential forms on the spacetime manifold. Moreover, gauge transformations, equations of motion, etc. acquire  a clear geometrical meaning in terms of the space of maps from the source to the target.

Various extensions of the AKSZ construction to nontopological gauge theories are known by now. One possibility is to allow for infinite-dimensional target space~\cite{Barnich:2006hbb,Barnich:2010sw,Grigoriev:2010ic,Grigoriev:2012xg} directly related to the jet-bundle BV description of the system as well as to to the geometrical approach to PDEs~\cite{Vinogradov1981,Krasil?shchik-Lychagin-Vinogradov,Krasil'shchik:2010ij} and the so-called unfolded approach~\cite{Vasiliev:1988xc,Vasiliev:2005zu} of higher spin theory. Another possibility is to replace the spacetime exterior algebra with a differential graded commutative algebra which is not freely generated~\cite{Bonechi:2009kx,costello2011renormalization,Bonechi:2022aji}, see also~\cite{Costello:2016mgj,Cederwall:2023wxc}.\footnote{See also~\cite{Cattaneo:2023cnt,Arvanitakis:2024dbu,Borsten:2024alh} for somewhat related modifications of the AKSZ construction.}  In this work we mostly employ an alternative extension of the AKSZ construction in which the target-space symplectic structure is allowed to degenerate so that the equations of motion do not force all the components of the generalized curvature to vanish and the system perfectly describes nontopological theories in terms of the finite-dimensional target space~\cite{Alkalaev:2013hta,Grigoriev:2016wmk,Grigoriev:2020xec}, see also~\cite{Sharapov:2021drr,Dneprov:2022jyn,Grigoriev:2024ecv}. The up-to-date version of the approach is known as the presymplectic BV-AKSZ formalism~\cite{Grigoriev:2022zlq,Dneprov:2024cvt} and gives a very concise description of the BV formulation of local gauge theories in terms of the finite-dimensional geometry of presymplectic almost $Q$-manifolds.  

The crucial feature of the AKSZ construction is that it allows to systematically add/remove contractible  space-time dimensions while keeping the system merely equivalent~\cite{Barnich:2006hbb}, see also ~\cite{Vasiliev:2001zy,Barnich:2006pc,Bonechi:2009kx,Bekaert:2009fg,Alekseev:2010ud,Cattaneo:2012qu} where analogous ideas appeared in the related contexts. For instance, adding a time direction to the AKSZ model describing Hamiltonian BFV formulation, results in the BV formulation of the same theory and the other way around~\cite{Grigoriev:1999qz,Barnich:2003wj}. In the more restrictive setup of free differential algebras the analogous relations are also known for quite some time~\cite{Sullivan:1977fk,DAuria:1982mkx,Castellani:1991et,Ponomarev:2010ji,Misuna:2013ww}. The same idea is employed in constructing formulations where certain symmetries are realised in a geometrical way. For instance, analogous relations are employed in deriving manifestly AdS or conformal invariant formulations of gauge fields~\cite{Barnich:2006pc,Bekaert:2009fg,Alkalaev:2011zv,Bekaert:2013zya} starting from their formulations in terms of the ambient space where the AdS/conformal symmetry acts linearly and geometrically. 

This feature becomes especially useful in the context of supersymmetric systems where the formulations in terms of the superspace, where supersymmetry acts geometrically, are particularly relevant. For instance, the superspace formulation of the simplest supermultiplet can be derived systematically, starting from its formulation as a free differential algebra in Minkowski space~\cite{Ponomarev:2010ji,Misuna:2013ww}. However, the direct application of the Lagrangian AKSZ construction to superspace formulations is not so straightforward as differential forms on superspaces can not be integrated without extra structures, see~\cite{Grassi:2016apf,Salnikov:2016bny,Hulik:2022hpb} for the discussion of the superspace AKSZ sigma-models.

In this work we elaborate on the presymplectic BV-AKSZ approach to supersymmetric systems. In particular, we construct a presymplectic BV-AKSZ formulation of the minimal $N=1$ supergravity in 4 dimensions~\cite{Freedman:1976xh}, giving a concise and 1st order version of the known BV formulation~\cite{Baulieu:1990uv}, see also~\cite{Brandt:1996au,Brandt:2002rs,Boulanger:2018fei}.\footnote{
After the first version of this preprint appeared, the explicit derivation of the  BV extension of the 1st order $N=1$, $D=4$ supergravity action has been constructed \cite{Cattaneo:2025wbz} in the usual way using Darboux coordinates for the BV symplectic structure. As far as the BV formulation of other supergravity models are concerned let us also mention recent papers~\cite{Cattaneo:2024sfd,Kupka:2025hln} which appeared when this work was being completed.} Because any presymplectic BV-AKSZ system can be considered at the level of equations of motion, it can be naturally uplifted to the extended spacetime. The presymplectic AKSZ formulation of $N=1$ $D=4$ supergravity lifted to the respective ``group manifold'' naturally reveals the main objects of the so-called group manifold approach to (super)gravity put forward in~\cite{DAuria:1982mkx,Castellani:1991eu}, see also \cite{Fre:2008qw,Castellani:2018zey} for a  more recent discussion. More precisely, the extended system describes a version of the corresponding Cartan connection on the total space of the underlying fiber bundle.

While the uplift of supergravity to the superspace is done at the level of equations of motion only, we demonstrate that the analogous treatment of the toy-model of supersymmetric particle can be easily performed at the Lagrangian level as well. To this end we introduce the extension of the presymplectic BV-AKSZ formalism, where the source space is replaced with a suitable differential graded commutative algebra with the trace, giving a concise framework to study the relations between presymplectic BV-AKSZ models obtained by adding/removing space-time directions. This can be also considered as a presymplectic generalization of the extension of the AKSZ construction put forward in~\cite{Bonechi:2009kx,costello2011renormalization,Bonechi:2022aji}. It turns out that employing suitable degenerate traces, the supersymmetric models can be cast into the presympletic AKSZ form. In our toy model example, the necessary modification of the trace can be understood as the insertion of a suitable picture-changing operator from~\cite{Castellani:2017ycm}. We explicitly illustrate all the steps using the toy-model of supersymmetric particle. In particular, we demonstrate that the picture-changing operator employed in the BV-AKSZ formulation of the model on the superspace, defines a presymplectic BV-AKSZ formulation.

The paper is organized as follows: Section~\bref{sec:prelim} is devoted to the background material. This includes presymplectic AKSZ construction illustrated by the example of the Palatini-Cartan-Weyl (frame-like) formulation of Einstein gravity. There we also review the conventional Lagrangian formulation of the gravitino field and the full $N=1$, $D=4$ supergravity. In section~\bref{sec:sugra} we present a presymplectic BV-AKSZ formulation of $N=1$ $D=4$ supergravity and show that it reproduces both the classical Larangian and its BV extension in a very concise form. General framework to study (presymplectic) BV-AKSZ formulations on spacetimes related by adding/removing contractible dimensions is given in Section~\bref{sec:contract}. Then in Section \bref{sec:rheonomy} we study the uplift of (super)gravity to the superspace and show that it naturally reproduces the respective Cartan geometry. The relation between the (presymplectic) BV-AKSZ formulations in the spacetime and the superspace are studied in details in Section~\bref{sec:particle} using the example of $N=1$ supersymmetric particle. In particular, we systematically derive the respective superfield action by uplifting the spacetime BV-AKSZ formulation to the superspace. Some technical details and a summary of conventions are given in the Appendices.

\section{Preliminaries}
\label{sec:prelim}
\subsection{Massless spin $3/2$ in Minkowski space}
Massless spin-$3/2$ is described by a spacetime  1-form with values in Majorana spinors $\psi_\mu$ ($\overline{\psi_\mu} = \psi_\mu^\dagger \gamma_0 = \psi_\mu^T C$). We will use notation $\psi = \psi_\mu dx^\mu$ and assume that $\psi_\mu$ anticommute, i.e. $\epsilon(\psi_\mu) =\p{\psi_\mu}=1$, see Appendix~\bref{sec:app-conventions} where our conventions are summarized. The Lagrangian is given by 
\begin{equation}
    L = \epsilon^{\mu\nu\rho\sigma} \overline{\psi}_\mu \gamma_5 \gamma_\nu \partial_\rho \psi_\sigma.
\end{equation}
and its equations of motion are the massless Rarita–Schwinger equations:
\begin{equation}
    \epsilon^{\mu\nu\rho\sigma} \gamma_5 \gamma_\nu \partial_\rho \psi_\sigma = 0.
\end{equation}

The Lagrangian is invariant under the following gauge transformations:
\begin{equation}
    \psi_\mu \to \psi_\mu + \partial_\mu \varepsilon, 
\end{equation}
where the Majorana spinor $\varepsilon$ is a gauge parameter. Together with the standard gauge-fixing condition $\gamma^\mu \psi_\mu = 0$ the equations of motion give the full set of irreducibility conditions:
\begin{equation}
    \partial^\mu \psi_\mu = 0, \qquad \gamma^\nu \partial_\nu \psi_\mu = 0, \qquad \Box \psi_\mu = 0.
\end{equation}

\subsection{$N=1$, $D=4$ Supergravity}

The action of the minimal $N=1$ supergravity in four spacetime dimensions is that of spin $3/2$ coupled to Palatini-Cartan-Weyl gravity and can be written as:
\begin{equation} \label{N=1 D=4 sugra action}
    S =  \int \half R^{ab} \wedge e^c \wedge e^d \epsilon_{abcd} - 2i \overbar{\psi} \wedge \gamma_5 \gamma_a d_\Gamma \psi \wedge e^a, \quad d_\Gamma \psi = dx^\mu D_\mu \psi = dx^\mu (\partial_\mu + \frac{1}{4}\omega_\mu^{ab}\gamma_{ab})\psi,
\end{equation}
where $\gamma_{ab} = \frac{1}{2}[\gamma_a, \gamma_b]$, $R^{ab} = d\omega^{ab} + \omega^a_{\enspace c} \wedge \omega^{cb}$.

Equations of motion have the following form:
\begin{subequations}
\begin{equation}\label{eq-of-motion-omega}
    \delta_\omega S = 0 \, \Rightarrow \, R^a =  T^a - \frac{1}{2} \overbar{\psi}\wedge \gamma^a \psi = 0,
\end{equation}
\begin{equation}\label{eq-of-motion-e}
    \delta_e S = 0 \, \Rightarrow \, R^{ab} \wedge e^c \epsilon_{abcd} - 2i\overbar{\psi} \wedge \gamma_5\gamma_d d_\Gamma \psi = 0,
\end{equation}
\begin{equation}\label{eq-of-motion-psi}
    \delta_\psi S = 0 \, \Rightarrow \, 2 \gamma_a d_\Gamma \psi \wedge e^a + \gamma_a \psi \wedge R^a = 0,
\end{equation}
\end{subequations}
where $T^a = de^a + \omega^a_{\enspace b}e^b$.

Action~\eqref{N=1 D=4 sugra action} is invariant under the following supersymmetry transformations:
\begin{equation}
    \begin{gathered}
    \delta e^a = \overbar{\epsilon}\gamma^a \psi,\\    
    \delta \psi = d_\Gamma \epsilon.
\end{gathered}
\end{equation}
provided algebraic constraint $R^a = 0$ holds. The 1st order action \eqref{N=1 D=4 sugra action} is still invariant under local supersymmetry but the above transformations are to be extended to Lorentz connection. Note that the the algebra of gauge transformations in the 1st order formulation  does not close off-shell. 


\subsection{Presymplectic BV-AKSZ construction}

In the context of BV-BRST approach to systems involving physical fermions one has to take care of two degrees: the fermion degree denoted by $\epsilon(\cdot)$ and the ghost degree denoted by $\gh{\cdot}$. The total Grassmann degree which determines the graded commutativity is denoted by $\p{\cdot}$ and is given by $\p{A}=\gh{A}+\epsilon{(A)}\,\, \mathrm{mod}\,\,2$. For instance, for homogeneous functions $A,B$ on the graded supermanifold one has 
$AB=(-1)^{\p{A}\p{B}}BA$. Unless otherwise specified we assume that the Grassmann degree of the main structures such as BV differential or symplectic structure, are compatible with the ghost degree, e.g.~$\p{Q}=\gh{Q} \,\, \mathrm{mod}\,2$.  

Let $(F,\omega^F,Q)$ be a weak presymplectic $Q$-manifold~\cite{Grigoriev:2022zlq}, i.e. a graded supermanifold equipped with vector field $Q,\p{Q}=\gh{Q}=1$ and presymplectic form $\omega^F = d\chi, \gh{\omega^F}=k$ such that,
\begin{equation}
 L_Q \omega^F = 0\,,\qquad  i_Qi_Q( \omega^F)=0\,. 
\end{equation}
Note that these relation imply $i_{Q^2}\omega^F=0$ so that $Q$ is nilpotent on the symplectic quotient. It follows there exists Hamiltonian $H\in C^\infty(F)$ such that\footnote{For $\gh{\omega^F}={-1}$ the Hamiltonian may exist only locally.}
\begin{equation}
    i_Q \omega^F+dH=0\,.
\end{equation}
Now let us consider a real manifold $X$ (space-time) of dimension $\dim(X) = \gh{\omega^F}-1$ and its shifted tangent bundle $T[1]X$ with coordinates $(x^\mu, \atheta^\nu)$. $T[1]X$ is naturally a $Q$-manifold with the $Q$-structure being the de Rham differential: $\dx = \atheta^\mu \frac{\partial}{\partial x^\mu}$. 

Fields of the presymplectic BV-AKSZ sigma model $((F,Q,\omega^F),(T[1]X,\dx))$   are the 
ghost-degree preserving supermaps $\sigma: T[1]X \to F$. The (presymplectic) AKSZ action is given by 
\begin{equation}
\label{pAKSZ-action}
    S[\sigma] = \int_{T[1]X}  \left[ \sigma^\ast(\chi)(d_X) + \sigma^\ast(H)\right]\,,
\end{equation}
and its Euler-Lagrange equations take the form
\begin{equation}\label{pAKSZ-eq}
    \sigma^\ast(\omega^F_{AB})\left(d_X\sigma^\ast(\psi^B) - \sigma^\ast(Q^B)\right) = 0\,,
\end{equation}
where $\psi^A$ are coordinates on $F$ and $Q = Q^A\frac{\partial}{\partial \psi^A}, \,\omega^F = \omega^F_{AB}d\psi^Bd\psi^A$. Gauge transformations are:
\begin{equation}\label{pAKSZ-gauge-sym}
    \delta \sigma^\ast = \sigma^\ast \circ [Q+\dx,Y],
\end{equation}
where the gauge parameter $Y$ is a vertical vector field of ghost degree $-1$ on $T[1]X \times F$, which preserves $\omega^F$ pulled back to $T[1]X \times F$. 

Note that if $R=R^A\dl{\psi^A}$ is a regular vector filed on $F$ such that $i_R\omega^F=0$, it generates an algebraic gauge transformation of~\eqref{pAKSZ-action}. In other words, if $\psi^1$ is a coordinate such that $\omega^F_{1A}=0$ then the above action does not depend on the field $\psi^1$.  The BV action extending \eqref{pAKSZ-action} is obtained by considering a generic supermap $\hat\sigma$ and taking a symplectic quotient of the space of supermaps by the kernel distribution of the degree $-1$ presymplectic structure induced by $\omega^F$ on the space of supermaps, see~\cite{Grigoriev:2020xec,Dneprov:2022jyn,Grigoriev:2022zlq,Dneprov:2024cvt} for further details.

The presymplectic BV-AKSZ construction described above admits a straightforward generalisation analogous to that of the usual symplectic AKSZ construction. More precisely in place of $(T[1]X,\dx)$ one can take a more general $Q$-manifold $(\manX,\delta)$ equipped with a $\delta$-invariant density $\rho$. What is interesting is that in the presymplectic setup it is also natural to allow for a possibly degenerate $\rho$ that could lead
to the addition degeneration of the presymplectic structure on the space of supermaps. 
This generalisation turns out to be very useful in the case of superspace AKSZ  models and we discuss it in more details in Section~\bref{sec:generalAKSZ}.


\subsection{Palatini-Cartan-Weyl gravity as a presymplectic BV-AKSZ}
\label{pAKSZ-Gravity}

As an example let us recall the presymplectic BV-AKSZ formulation of Einstein gravity.
Following~\cite{Grigoriev:2020xec,Alkalaev:2013hta}, let $F = \mathfrak{iso}(1,n-1)[1]$ be a shifted Poincar\'e algebra equipped with the Chevalley-Eilenberg differential $Q$ seen as a homological vector field on $F$
\begin{equation}
    Q = -\rho^a_{\enspace b} \xi^b \frac{\partial}{\partial \xi^a} - \rho^a_{\enspace c}\rho^{cb} \frac{\partial}{\partial \rho^{ab}}.
\end{equation}
and the following presymplectic form of ghost degree $n-1$:
\begin{equation}
    \omega^F = \mathcal{V}_{abc}(\xi)d\xi^a \wedge d\rho^{bc} = d(\mathcal{V}_{ab}(\xi)d\rho^{ab}), \quad \mathcal{V}_{a_1\dots a_k}(\xi) = \frac{1}{(n-k)!}\epsilon_{a_1 \dots a_k, b_1 \dots b_{n-k}} \xi^{b_1}\dots \xi^{b_{n-k}}\,.
\end{equation}
This $\omega^F$ is $Q$-invariant so that
\begin{equation}
    i_Q \omega^F = -dH = -d(\mathcal{V}_{ab}(\xi) \rho^a_{\enspace c} \rho^{cb}).
\end{equation}

Now let us consider 4-dimensional space-time manifold $X$. Hereafter we assume that $X = \mathbb{R}^n$  for simplicity. Maps $\sigma: T[1]X \to F$ can be parameterized by
\begin{equation}
    \sigma^\ast (\rho^{ab}) = \omega^{ab}_{\mu}(x)\atheta^\mu, \quad \sigma^\ast (\xi^a) = e^a_{\mu}(x)\atheta^\mu.
\end{equation}
Identifying functions on $T[1]X$ with differential forms on $X$ the AKSZ-action takes the following form:
\begin{equation}
    S_{GR}[e,\omega] = \int_X \frac{1}{(n-2)!}\epsilon_{a_1 \dots a_{n-2}bc} e^{a_1} \wedge \dots e^{a_{n-2}}R^{bc}\,,\qquad R^{ab}=d\omega^{ab}+\omega^a_{\enspace c}\omega^{cb}\,.
\end{equation}
This is the well-known Palatini-Cartan-Weyl action of gravity. The complete BV formulation is obtained by taking the symplectic quotient of the space of the supermaps.


\section{Presymplectic  BV-AKSZ formulation of $N=1$, $D=4$ Supergravity}
\label{sec:sugra}

\subsection{Basic structures}

Let $\algg=\mathfrak{siso(1,3)}$ be the $N=1$ Super-Poincar\'e algebra in four dimensions. The commutation relation in the standard basis can be written as: 
\begin{equation}
\begin{gathered}
    [M_{ab}, M_{cd}] = \eta_{bc} M_{ad} + \eta_{ad}M_{bc} - \eta_{bd}M_{ac} - \eta_{ac}M_{bd},\\
    [M_{ab}, P_c] = \eta_{bc} P_a - \eta_{ca} P_b,\\
    \{\mathcal{Q}_\alpha, \mathcal{Q}_\beta\} =- (\gamma^a)_{\alpha\beta} P_a,\\
    [M_{ab}, \mathcal{Q}_\beta] = \frac{1}{2} \mathcal{Q}_\alpha (\gamma_{ab})^\alpha_{\enspace \beta}\,.
\end{gathered}
\end{equation}
Our notations are summarized in Appendix~\bref{sec:app-conventions}. If we denote by $e_I$ all the basis elements of $\algg$, the coordinates on the associated graded supermanifold $\algg[1]$ are $\Psi^I$ such that $\gh{\Psi^I}=1$ and $\p{\Psi^I}=1-\p{e_I}~mod~2$,
where $\p{\cdot}$ denotes the Grassmann parity of an element of $\algg$ or a function on $\algg[1]$. In more details:
\begin{equation}
    \begin{gathered}
    \rho^{ab},  \quad \gh{\rho^{ab}} = 1, \quad \p{\rho^{ab}} = 1,\\
    \xi^a, \quad  \gh{\xi^a}=1 \quad \p{\xi^a} = 1,\\
    \psi^\alpha, \quad \gh{\psi^\alpha} = 1, \quad  \p{\psi^\alpha} = 0.
\end{gathered}
\end{equation}
where $\rho^{ab}$ is associated to $M_{ab}$, $\xi^a$ to $P_a$, and $\psi^\alpha$ to $\mathcal{Q}_\alpha$. 

The Chevalley-Eilenberg differential, understood as a homological vector field on $\algg[1]$ is defined in a standard way via $Q\mathbf{\Psi}=-\half\commut{\mathbf{\Psi}}{\mathbf{\Psi}}$, where $\mathbf{\Psi}=\Psi^I \tensor e_I$ is a distinguished element of 
$\cC^\infty(\algg[1])\tensor \algg$. Note that $\mathbf{\Psi}$ is an analog of "string field" known in the BRST first-quantized approach.  In terms of the components one has:
\begin{equation}
    Q = -\left(\rho^a_{\enspace b} \xi^b - \frac{1}{2}(\gamma^a)_{\alpha\beta} \psi^\alpha \psi^\beta \right) \frac{\partial}{\partial \xi^a} - \rho^a_{\enspace c} \rho^{cb} \frac{\partial}{\partial \rho^{ab}} - \frac{1}{4} (\gamma_{ab})^\alpha_{\enspace \beta} \rho^{ab}\psi^\beta \frac{\partial}{\partial \psi^\alpha}.
\end{equation}

The next ingredient we need is a compatible presymplectic form of degree $3$. Consider the following $2$-form
\begin{equation}
    \begin{gathered}
    \label{omega-F}
    \omega^F =  \epsilon_{abcd} \xi^d d\xi^a d\rho^{bc} - 2i (\gamma_5 \gamma_a)_{\alpha\beta}  \xi^a d \psi^\alpha  d \psi^\beta  + 2i (\gamma_5 \gamma_a)_{\alpha\beta} \psi^\alpha d\psi^\beta d\xi^a = d\chi,\\
    \chi = \frac{1}{2}\epsilon_{abcd} \xi^c \xi^d d\rho^{ab} - 2i (\gamma_5 \gamma_a)_{\alpha\beta} \psi^\alpha d\psi^\beta \xi^a.
    \end{gathered}
\end{equation}
One can show $\omega^F$ is indeed $Q$-invariant or equivalently:
\begin{equation}
    i_Q \omega^F =- dH= -d \left(\frac{1}{2}\epsilon_{abcd}\xi^c \xi^d \rho^a_{\enspace e} \rho^{eb} - \frac{i}{2} \psi^\alpha ( \gamma_5 \gamma_a \gamma_{bc})_{\alpha\beta} \rho^{bc} \psi^\beta \xi^a\right).
\end{equation}
The proof is relegated to Appendix~\bref{app:iq=dH}.

Now let $X$ be a 4 dimensional space-time manifold. The ghost degree preserving supermaps $\sigma: T[1]X \to \algg[1]$ can be parameterized as follows:
\begin{equation}\label{Sugra map space}
    \begin{gathered}
    \sigma^\ast(\xi^a) = e^a = e^a_\mu (x)\atheta^\mu, \quad \sigma^\ast(\rho^{ab}) = \omega^{ab} = \omega^{ab}_\mu(x) \atheta^\mu, \quad \sigma^\ast(\psi^\alpha) = \psi^\alpha = \psi^\alpha_{\enspace \mu}(x)\atheta^\mu,\\
    \p{e^a_\mu} = 0, \qquad \p{\omega^{ab}_\mu} = 0, \qquad \p{\psi^\alpha_{\enspace \mu}} =1.
\end{gathered}
\end{equation}
The AKSZ-like action~\eqref{pAKSZ-action} takes the following form:
\begin{equation}
    S_{sugra}[e,\omega,\psi] = \int_X\left( \frac{1}{2} \epsilon_{abcd} (d\omega^{ab} + \omega^a_{\enspace e} \wedge \omega^{eb})\wedge e^c \wedge e^d - 2i \overbar{\psi} \wedge \gamma_5 \gamma_a (d\psi + \frac{1}{4}\omega^{bc}\gamma_{bc}\wedge \psi) \wedge e^a\right) d^4x
\end{equation}
and indeed coincides with \eqref{N=1 D=4 sugra action} up to an overall factor.

For completeness, let us also write down the explicit expressions for $\omega^F, Q$, and $H$ in the two-component spinor notations for 4 dimensional spinors and tensors.  More precisely we employ the standard conventions from~\cite{penrose1984spinors,huggett1994introduction}. The expressions for    
the main target space structures take the form:
\begin{equation}\label{SUGRA presymplectic structure}
   \omega^F = \frac{i}{2}\left(\xi_A^\gammadot d\xi_{ \B\gammadot}d\rho^{A \B} - \xi_\alphadot^\C d\xi_{\betadot\C}d\rhobar^{\alphadot\betadot}\right)- 4id\psibar^\alphadot d\psi^A \xi_{A\alphadot} -2i (d\psibar^\alphadot \psi^{A} - \psibar^\alphadot d\psi^A) d\xi_{A\alphadot},
\end{equation}
\begin{multline}\label{Q-2c-spinors}
 Q = -\left(\frac{1}{2}(\rho^{A}_{\enspace \B}\xi^{ \B\alphadot} + \rhobar^\alphadot_{\enspace\betadot}\xi^{\betadot A})+2\psibar^\alphadot\psi^A\right)\frac{\partial}{\partial \xi^{A\alphadot}} -\frac{1}{2} \left(\rho^A_{\enspace\C}\rho^{\C \B}\frac{\partial}{\partial \rho^{A \B}} +\rhobar^\alphadot_{\enspace\gammadot}\rhobar^{\gammadot\betadot}\frac{\partial}{\partial \rhobar^{\alphadot\betadot}}\right) - \\
    -\frac{1}{2}\left(\rho^A_{\enspace \B}\psi^ \B \frac{\partial}{\partial \psi^A}  + \rhobar^\alphadot_{\enspace\betadot}\psibar^\betadot\frac{\partial}{\partial \psibar^\alphadot}\right),
\end{multline}
\begin{equation}
    H = \frac{i}{4}(\xibar_{\alphadot\betadot} \rhobar^{\alphadot}_{\enspace\gammadot} \rhobar^{\gammadot\betadot}-\xi_{A \B}\rho^{A}_{\enspace\C} \rho^{\C \B}) -i \psibar^\alphadot\psi^A(\rho_{A}^{\enspace \B}\xi_{\B\alphadot} - \rhobar_{\alphadot}^{\enspace\betadot}\xi_{\betadot A}).
\end{equation}

The presymplectic BV-AKSZ formulation proposed in this work applies, strictly speaking,  to the case where the tangent bundle to $X$ is globally trivial. However, it can be directly extended to generic spacetime by constructing a $Q$-bundle over $T[1]X$, which is locally isomorphic to the present system. The resulting formulation is of more general type than presymplectic BV-AKSZ model and is known as presymplectic gauge PDE, see~\cite{Grigoriev:2022zlq,Dneprov:2024cvt}.

\subsection{Regularity of the presymplectic structure}
To make sure that the above construction gives not only the Lagrangian but also the full-scale BV formulation one needs to check that the presymplectic structure induced on the space of supermaps is regular provided one restricts to admissible supermaps which are  supermaps such that their frame field $e_\mu^a$ component is invertible. Another requirement is that the resulting BV master-action is proper, i.e. it takes all the gauge symmetries into account.

To study regularity it is convenient to introduce a graded supermanifold $\bar F=Smaps(T_x[1]X,F)$ of supermaps from $T_x[1]X$ to $F$, where $x\in X$ is a generic point of $X$. In contrast to $Smaps(T[1]X,F)$, $\bar F$ is finite-dimensional and it is enough to check that the presymplectic structure induced on $\bar F$ by $\omega^F$ on $F$ is regular.

Supermaps can be seen as homomorphsims of the respective algebras of functions $\hat\sigma^*:\cC^\infty(F) \to \cC^\infty(\bar F \times T_x[1]X)$. The coordinates on $\bar F$ are introduced  as follows:
\begin{equation}
\label{coord-barF}
\begin{gathered}
    \hat\sigma^\ast(\xi^a) = \xi^a + e^a_\mu \atheta^\mu + \frac{1}{2!}\st{2}{\xi}{}^a_{\mu\nu} \atheta^\mu \atheta^\nu+\frac{1}{3!}\st{3}{\xi}{}^a_{\mu\nu\lambda} \atheta^\mu\atheta^\nu\atheta^\lambda+\frac{1}{4!}\st{4}{\xi}{}^a_{\mu\nu\lambda\rho} \atheta^\mu\atheta^\nu\atheta^\lambda\atheta^\rho,\\
    \hat\sigma^\ast(\rho^{ab}) = \rho^{ab} + \omega^{ab}_\mu \atheta^\mu + \frac{1}{2!}\st{2}{\rho}{}^{ab}_{\mu\nu} \atheta^\mu \atheta^\nu+\frac{1}{3!}\st{3}{\rho}{}^{ab}_{\mu\nu\lambda} \atheta^\mu\atheta^\nu\atheta^\lambda+\frac{1}{4!}\st{4}{\rho}{}^{ab}_{\mu\nu\lambda\rho} \atheta^\mu\atheta^\nu\atheta^\lambda\atheta^\rho,\\
    \hat\sigma^\ast(\psi^\alpha) = \psi^\alpha + \psi^\alpha_\mu \atheta^\mu + \frac{1}{2!}\st{2}{\psi}{}^\alpha_{\mu\nu} \atheta^\mu \atheta^\nu+\frac{1}{3!}\st{3}{\psi}{}^\alpha_{\mu\nu\lambda} \atheta^\mu\atheta^\nu\atheta^\lambda+\frac{1}{4!}\st{4}{\psi}{}^\alpha_{\mu\nu\lambda\rho} \atheta^\mu\atheta^\nu\atheta^\lambda\atheta^\rho\,.
\end{gathered}
\end{equation}
In this coordinates the pullback of the evaluation map $ev:\bar F \times T_x[1]X \to F$ 
is given by the same equations \eqref{coord-barF} with $\hat\sigma^*$ replaced by $ev^*$. The presymplectic structure induced on $\bar F$ can be written as
\begin{equation}
\bar\omega=\int_{T_x[1]X} ev^*(\omega^F) \,,
\end{equation}
and by construction is a 2-form on $\bar F$. Strictly speaking, linear change of coordinates on $T_x[1]X$ results in the Jacobian factor so that the form $\omega_{BV}$ that $\bar\omega$ induces on $Smaps(X,\bar F)\simeq Smaps(T[1]X,\bar F)$ is already a genuine BV presymplectic form. However this subtlety is irrelevant for the analysis of regularity. For completeness, let us also give an expression for the BV master action in terms of the evaluation by extending the evaluation map from $T_x[1]X$ to the entire $T[1]X$ so that $ev: \bar \cF\times T[1]X \to F$, $\bar\cF$ is the space of supermaps from $T[1]X$ to $F$. Then the BV action seen as a function on $\bar\cF$ can be written as
\begin{equation}
S_{BV}=\int_{T[1]X} (ev^*(\chi)(\dx)+ev^*(H))\,.
\end{equation}
Note that $\bar\cF$ is a presymplectic manifold so that the standard BV formulation is obtained by taking a symplectic quotient. More details on supermaps and the associated evaluation map can be found in e.g.~\cite{Roytenberg:2006qz}.

We also need to recall how vector fields on $F$ are naturally prolonged to $\bar F$.  Let $V$ be a vector field on $F$, its prolongation $\bar V$ is a vector field on $\bar F$ determined by
\begin{equation}
\bar V ev^*(\Psi^I)=ev^*(V\Psi^I)\,,
\end{equation}
where $\Psi^I$ denote all coordinates on $F$.

First we verify that the correct spectrum of BV fields and antifields is indeed recovered.  We consider the presymplectic structure at  generic point  $p\in \mathrm{body}(\bar F)$ of the body of $\bar F$, i.e. a point where all the coordinates of nonvanishining ghost degree or Grassmann parity vanish. These are all the coordinates introduced in \eqref{coord-barF} save for $e^a_\mu,\omega_\mu^{ab}$. It is also convenient to adjust the  basis in such a way that $e^a_\mu=\delta^a_\mu$. The explicit expression for the presymplectic structure
at $p$ reads as
\begin{multline}\label{Sugra-presymplectic-form-body}
    \bar\omega_p = \bar\omega_p^{PCW} + \int d^4\atheta \left[-4i(\gamma_5\gamma_a)_{\alpha\beta}\atheta^a \left(\frac{1}{3!}d\psi^\alpha d\st{3}{\psi}{}^\beta_{bcd} - \frac{1}{2!}d\psi_b^\alpha  d\st{2}{\psi}{}_{cd} \right)\atheta^b \atheta^c\atheta^d\right]=\\
   = \bar\omega_p^{PCW} - 4i(\gamma_5\gamma_a)_{\alpha\beta}\left(\frac{1}{3!}d\psi^\alpha d\st{3}{\psi}{}^\beta_{bcd} - \frac{1}{2!}d\psi_b^\alpha d\st{2}{\psi}{}_{cd} \right)\epsilon^{abcd},
\end{multline}
where $\bar\omega_p^{PCW}$ is the presymplectic structure in the sector of $\xi^a$ and $\rho^{ab}$ which is known to produce a correct BV spectrum in the gravity  sector, see~\cite{Grigoriev:2020xec}. In other words, at this point antifields for $\psi^\alpha$ are parameterized by $-4i(\gamma_5\gamma^a)(*\st{3}{\psi})_a$ while antifields for ${\psi}{}^\alpha_b$ by $-4i(\gamma_5\gamma_a)(*\st{2}{\psi})^{ba}$,
where $*$ denotes the Hodge conjugation acting on world indices. It follows that the BV spectrum of fields and antifields is correct in the sector of gravitino field as well.

In order to prove that $\bar\omega$ defined on $\bar F$ is indeed regular, we employ the technique similar to that used in~\cite{Grigoriev:2020xec} to prove the analogous statement in the case of Einstein gravity. Let $\bar\cK$ be the kernel distribution of $\bar \omega$ and denote by $\bar\cK_p \subset T_p\bar F$ the kernel of $\bar\omega_p$ in $T_p \bar F$. In the case of graded supermanifolds, distributions are submodules of vector fields seen as modules over the algebra of functions on the manifold. A distribution is called regular if locally it is freely generated. We have:
\begin{prop}\label{Sugra-presymplectic-form-regularity-distribution-continuation}
    There exists a distribution $\bar\cK'\subset \bar\cK$ on $\bar F$ which is regular and at any point $p \in \mathrm{body}(\bar F)$ coincides with $\bar\cK_p$.
\end{prop}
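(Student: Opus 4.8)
The plan is to construct an explicit sub-distribution $\bar\cK'\subset\bar\cK$ that is manifestly freely generated and whose fibre at every admissible body point $p$ (as throughout, ``admissible'' means the frame component $e^a_\mu$ is invertible) coincides with the linear kernel $\bar\cK_p=\Ker\bar\omega_p\subset T_p\bar F$. The reason $\bar\cK$ itself need not be regular is that the rank of $\bar\omega$ can jump at non-admissible points; we therefore only aim at a regular $\bar\cK'$ sitting inside $\bar\cK$ that is still large enough at body points. The whole construction is organized by the filtration of the fibre coordinates \eqref{coord-barF} by ghost degree (equivalently, by the form degree on $X$ of the corresponding component): we look for generators of $\bar\cK'$ that are triangular with respect to this filtration, with invertible leading parts. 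This single structural feature does two jobs at once: it makes the generators constructible recursively, by solving at each step an inhomogeneous linear equation for the leading component; and it makes them $\cC^\infty(\bar F)$-linearly independent, so that $\bar\cK'$ is regular.

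I would first carry out the recursion. The leading parts of the generators are, at the top of the filtration, the coordinate directions along the top-form components $\st{4}{\xi}{}^a,\st{4}{\rho}{}^{ab},\st{4}{\psi}{}^\alpha$, the $(\gamma_5\gamma_a)$-transverse parts of $\st{3}{\psi}$ and $\st{2}{\psi}$, and the prolongations to $\bar F$ of the null directions of $\omega^F$ on $F$ (which are null for $\bar\omega=\int_{T_x[1]X}ev^\ast(\omega^F)$ because prolongation is compatible with $ev^\ast$, and $\omega^F$ is forced to be degenerate on $\algg[1]$ by the mismatch between its even and odd dimensions). Descending in the filtration one solves $i_V\bar\omega=0$ modulo the generators already produced; this is where invertibility of $e^a_\mu$ — and, in the gravitino block, invertibility of the spinor matrices $\gamma_5\gamma_a$ together with non-degeneracy of the Hodge pairing on world-form indices — enters, exactly as in the gravity analysis of \cite{Grigoriev:2020xec}. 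The genuinely new point relative to \cite{Grigoriev:2020xec} is that the $\psi$-dependent part of $\omega^F$ mixes the gravity and gravitino sectors away from $p$, so the purely gravitational generators of \cite{Grigoriev:2020xec} must be dressed by corrections involving $\psi$ and the higher components $\st{k}{\psi}$; triangularity guarantees these corrections do not touch the leading parts, so neither the recursion nor the independence is affected.

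I would then check the dimension count at a body point $p$. By \eqref{Sugra-presymplectic-form-body} the form $\bar\omega_p$ is block diagonal, $\bar\omega_p=\bar\omega_p^{PCW}\oplus\bar\omega_p^{\psi}$ (all mixed $d\xi\,d\psi$ and $d\rho\,d\psi$ terms carry a coefficient that is one of the odd coordinates $\psi,\psi_\mu,\dots$, which vanish at $p$). In the $\bar\omega_p^{PCW}$ block $\dim\bar\cK_p$ is the value computed in \cite{Grigoriev:2020xec}, reproducing the gravity BV spectrum. In the $\bar\omega_p^{\psi}$ block the only pairings are $(\psi^\alpha,\st{3}{\psi})$ and $(\psi^\alpha_\mu,\st{2}{\psi})$ through $(\gamma_5\gamma_a)_{\alpha\beta}\epsilon^{abcd}$; invertibility of $\gamma_5\gamma_a$ and of the Hodge map then identify $\bar\cK_p^{\psi}$ with the span of all of $\st{4}{\psi}$, of the part of $\st{3}{\psi}$ annihilated by the pairing with $\psi^\alpha$, and of the part of $\st{2}{\psi}$ annihilated by the pairing with $\psi^\alpha_\mu$, whose dimension is precisely the complement of the non-degenerate gravitino field--antifield pairing identified below \eqref{Sugra-presymplectic-form-body}. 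Since the generators of $\bar\cK'$ lie in $\bar\cK$, are $\cC^\infty(\bar F)$-independent (hence their values at $p$ are linearly independent vectors in $\bar\cK_p$), and are as numerous as $\dim\bar\cK_p$, they span $\bar\cK_p$; thus $\bar\cK'|_p=\bar\cK_p$, which is the claim.

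The step I expect to be the main obstacle is the recursion of the second paragraph in the presence of the non-block-diagonal, fully nonlinear $\bar\omega$: one must show that a consistent triangular choice of the $\psi$-, $\st{k}{\psi}$- and $\st{k}{\rho}$-dependent corrections to the gravitational generators exists at every step, and that the ``diagonal'' blocks of the linear systems stay invertible over the whole admissible locus and not just at $p$. This is the precise place where admissibility is used, and it is the fermionic counterpart of the recursion of \cite{Grigoriev:2020xec}; conceptually it should go through unchanged, the only genuinely new bookkeeping being the Grassmann signs and the spinor-index contractions.
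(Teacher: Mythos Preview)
Your strategy—build triangular generators directly on $\bar F$ and run a recursion in the form-degree filtration—is plausible in outline, but the step you yourself flag as the main obstacle is precisely what the paper circumvents, and you have not actually carried it out. The paper's argument is organized around a different observation: every vector in $\bar\cK_p$ is the value at $p$ of the \emph{prolongation to $\bar F$ of a vector field on the target $F$} lying in the kernel of $\omega^F_0$ (the $\psi$-independent piece of $\omega^F$). Concretely, a kernel vector $T^\alpha_{a_1\ldots a_k}\dl{\psi^\alpha_{a_1\ldots a_k}}$ at $p$ is the restriction of the prolongation of $k=T^\alpha_{a_1\ldots a_k}\,\xi^{a_1}\cdots\xi^{a_k}\dl{\psi^\alpha}$ on $F$, and similarly in the $\xi,\rho$ sector by the analysis of \cite{Grigoriev:2020xec}. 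Thus the whole problem is pushed down from $\bar F$ to the finite-dimensional target $F$.

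There the correction step becomes a single target-space statement (the paper's Lemma~\bref{lemma:kernel}): any vector field in $\Ker\omega^F_0$ can be modified by a vector field proportional to $\psi^\alpha$ so as to land in $\Ker\omega^F$; this is verified by explicit computation in two-component spinor notation (Appendix~\bref{app:kernel of presympelctic form}). Because the correction is proportional to $\psi$, its prolongation vanishes at every body point, so the corrected prolongations still hit $\bar\cK_p$ on the nose; linear independence over $\cC^\infty(\bar F)$ follows from linear independence at $p$. No recursion on $\bar F$ is needed, and there is nothing to check about invertibility of diagonal blocks away from $p$: the only inputs are the identification of $\bar\cK_p$ (your dimension count, which is fine) and the target-space lemma. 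What your route would buy, were the recursion completed, is an argument that avoids the explicit spinor calculations of the appendix; as written, however, that recursion is asserted rather than performed, so the proposal has a gap exactly where you anticipated one.
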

\begin{proof}
We first check that any vector $\bar k_p\in \bar\cK_p\subset T_p \bar F$ is the restriction to $p$ of a vector field from $\bar\cK$. To this end let us 
decompose $\omega^F$ defined in  \eqref{omega-F} as $\omega^F=\omega_0^F+\omega_1^F$, where $\omega_0^F$ is the contribution proportional to $\xi^a$ (the first two terms) and $\omega_1^F$ is the contribution linear in $\psi^\alpha$ (the third term). It turns out that 
for any $\bar k_p  \in \bar\cK_p$ it is easy to find a vector field $k$  on $F$ such that $k$ lies in the kernel of $\omega^F_0$ and its prolongation to $\bar F$ coincides with $\bar k_p$ at $p$. To see this it is enough to consider vectors in the sector of $\xi,\rho$ and in the sector of $\psi$ independently. In the sector of $\xi,\rho$ this follows from the analysis of~\cite{Grigoriev:2020xec}. If a vector $\bar k_p =T^\alpha_{a_1\ldots a_k}\dl{\psi_{a_1\ldots a_k}^\alpha}\in T_p\bar F$ is in the kernel of $\bar \omega_p$, consider the following vector field on $F$:
\begin{equation}
k=T^\alpha_{a_1\ldots a_k}\xi^{a_1}\ldots \xi^{a_k}\dl{\psi^\alpha}\,.
\end{equation}
It is easy to check that it belongs to the kernel distribution of $\omega^F_0$ and the only term in its prolongation that does not vanish at $p$ is precisely $\bar k_p$.  We refer to the prolongation of $k$ as to its extension from $p$ to $\bar F$. Moreover, if two vectors in $T_p \bar F$ are linearly independent then their extensions are also linearly independent over $\cC^\infty(\bar F)$. Note also that the prolongation of $k$ restricted to any $p^\prime\in \textrm{body}\bar F$ is also a nonvanishing vector from the kernel of $\bar\omega_{p^\prime}$.
\begin{lemma} \label{lemma:kernel}
Let $k$ be a vector field on $F$ that belongs to the kernel of $\omega^F_0$. Then there exists a vector field $k^\prime$ such that it is proportional to $\psi^\alpha$ and  $k+k^\prime$ belongs to the kernel of $\omega^F=\omega^F_0+\omega^F_1$.
\end{lemma}
The proof is straightforward but lengthy and requires explicit coordinate considerations so that it is given in the Appendix~\bref{app:kernel of presympelctic form}. 

Because the prolongation of a vector field proportional to $\psi$ vanishes at $p$ this gives the desired result. Namely, as distribution $\bar\cK^\prime$ one takes the distribution generated by the prolongations of the vector fields in the kernel of $\omega^F$ whose restrictions to $p$ are linearly independent. Note that by construction the restriction to $\bar\cK^\prime$ to $p$ coincides with $\bar\cK_p$. Moreover, because one can generate $\bar \cK^\prime$ by
vector fields extending the basis in $\bar\cK_p$, distribution $\bar \cK^\prime$ is freely generated. 
\end{proof}
The regularity then follow from the following general statement:
\begin{prop}
Let $(M,\omega)$ be a presymplectic manifold, $p\in M$  a point, and $\cK^\prime \subset \cK$ be a subdistribution of the kernel distribution $\cK$ of $\omega$. If $\cK^\prime$ is regular and $\cK^\prime|_p$ coincides with the kernel of $\omega|_p$ in $T_pM$ then there exists a neighborhood of $p$ in $M$ such that $\cK=\cK^\prime$. 
\end{prop}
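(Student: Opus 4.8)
The plan is to deduce the statement from lower semicontinuity of the rank of $\omega$, together with the observation that a freely generated distribution which attains its expected fibre dimension at $p$ is, in a neighbourhood of $p$, an honest vector subbundle.

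First I would record two pointwise facts, valid at every point $q$ of a suitable neighbourhood of $p$. Since a vector field annihilating $\omega$ takes values in the pointwise kernel of $\omega$, one has the inclusions $\cK'|_q \subseteq \cK|_q \subseteq \Ker(\omega|_q) \subseteq T_qM$; evaluating at $p$ and using the hypothesis $\cK'|_p = \Ker(\omega|_p)$ forces all three spaces to agree there, so $\dim \cK|_p = \dim \Ker(\omega|_p) =: d$. On the other hand, in a local chart the coefficient matrix of $\omega$ has lower semicontinuous rank, so $q \mapsto \dim \Ker(\omega|_q)$ is upper semicontinuous and hence $\dim \Ker(\omega|_q) \le d$ near $p$.

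Next I would bring in regularity. After shrinking to a neighbourhood of $p$, present $\cK'$ as a free module on generators $v_1,\dots,v_r$. A free module of vector fields of rank $r$ must have fibre dimension $r$ on an open dense set: if the maximal fibre rank attained were some $s<r$, then on an open set where that value is attained one writes the remaining $v_j$ as $\cC^\infty$-combinations of $s$ of the others and, multiplying by a suitable bump function, obtains a nontrivial $\cC^\infty$-linear relation among the $v_i$ — contradicting freeness. Combining this with $\dim \cK'|_q \le \dim \Ker(\omega|_q) \le d$ near $p$ gives $r \le d$; but $\cK'|_p = \Ker(\omega|_p)$ has dimension $d$ and is spanned by the $v_i(p)$, so in fact $r = d$ and $v_1(p),\dots,v_d(p)$ is a basis of $\Ker(\omega|_p)$. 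Lower semicontinuity of the rank of the matrix with columns $v_1(q),\dots,v_d(q)$ then keeps these vectors independent for $q$ near $p$, so on this neighbourhood $\cK'$ is a rank-$d$ subbundle with $\dim \cK'|_q = d$. Squeezing $d = \dim \cK'|_q \le \dim \cK|_q \le \dim \Ker(\omega|_q) \le d$ yields $\cK'|_q = \cK|_q = \Ker(\omega|_q)$ for every such $q$. To upgrade this fibrewise identity to an identity of modules, take any $v \in \cK$: at each $q$ one has $v(q) \in \cK|_q = \mathrm{span}\{v_1(q),\dots,v_d(q)\}$, so $v = \sum_i f^i v_i$ with the $f^i$ uniquely determined and smooth, by Cramer's rule applied to the locally nonsingular linear system; hence $\cK \subseteq \cK'$, and together with $\cK' \subseteq \cK$ this gives $\cK = \cK'$ on a neighbourhood of $p$. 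In the graded/super setting I would run exactly the same argument with homogeneous local generators, applying the semicontinuity statements to the underlying reduced manifold and reading dimensions as superdimensions.

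I expect the only genuinely delicate point to be the claim that a freely generated distribution has its expected fibre dimension on a dense open set — equivalently, in the present situation, that $r = d$ with the generators pointwise independent near $p$; informally, a free module of vector fields can drop rank only on a nowhere dense locus, as the module $\langle x\,\partial_x\rangle$ on $\RR$ illustrates. The remainder is routine bookkeeping with lower semicontinuity of rank.
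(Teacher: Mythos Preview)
Your proof is correct and follows essentially the same approach as the paper, which also argues via lower semicontinuity of the rank of $\omega$ (phrased there simply as ``the rank of $\omega$ cannot drop when moving off the point''). Your treatment is considerably more detailed than the paper's brief sketch---in particular, the bump-function argument pinning down $r=d$ and the explicit upgrade from fibrewise equality to module equality are steps the paper leaves implicit, and for the graded/formal case both you and the paper give only an outline.
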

\begin{proof}
In the case of $M$ real this immediately follows from the fact that the rank of $\omega$ can not drop when moving off the point. If the neighborhood is formal (as in the case at hand) then
the statement is easily seen using local coordinates. Strictly speaking our case is mixed because the body of $\bar F$ is a real manifold, but the rank is constant on the body  of $\bar F$ so only extension in coordinates of nonvanishing degree matters. More precisely, one can perform the analysis for $T\bar F$
pulled back to the surface $e^a_\mu=\const$, $\omega_\mu^{ab}=const$, where the dependence on the base coordinates is formal.  
\end{proof}

\subsection{Generalization to AdS supergravity}

The formulation of the previous Section has a straightforward generalization to the AdS version of the minimal $N=1,D=4$ sugra. Recall, that the presymplectic BV-AKSZ formulation of Einstein gravity with nonvanishing cosmological constant is almost identical to that with the vanishing one~\cite{Grigoriev:2020xec,Alkalaev:2013hta}. The only difference is that the target space $Q$-structure is deformed and corresponds to (A)dS algebra seen as a deformation of the Poincar\'e one. At the same time the presymplectic structure remains unchanged.

It turns out that the situation with the minimal $N=1,D=4$ sugra is completely analogous. The only subtlety is that only negative cosmological constants are allowed because of the propertied of spinors in 4 dimensions. More precisely, the respective Lie algebra is $\mathfrak{osp(4|1)}$ which is a smooth deformation of $\mathfrak{siso}(1,3)$. In terms of the two-component spinor conventions the commutation relations of 
$\mathfrak{osp}(4|1)$ read as
\begin{equation}
\begin{gathered}
    [M_{AB}, P_{CC'}] = \epsilon_{C(A} P_{B)C'},\quad  [M_{AB},M_{CD}] = -2\epsilon_{(A(C}M_{B)D)}, \quad [M_{AB}, \Q_C] =\epsilon_{C(A} \Q_{B)}, \\
     [P_{AA'}, P_{BB'}] = \frac{\lambda}{2}\left(M_{AB}\epsilon_{A'B'} + \overbar{M}_{A'B'} \epsilon_{AB}\right), \quad [P_{AA'}, \Q_B] = -\frac{1}{2}g\Qbar_{A'}\epsilon_{AB}\\
     \{\mathcal{Q}_A,  \overbar{\mathcal{Q}}_{A'}\} = -2P_{AA'}, \quad  \{\Q_A,\Q_B\} = -2gM_{AB}\,,
\end{gathered}
\end{equation}
where $\lambda = \frac{\Lambda}{3}$ is the rescaled cosmological constant,
$|g|^2 = -\lambda$, and the remaining nontrivial commutators are obtained by complex conjugation. Chevalley-Eilenberg differential for $\mathfrak{osp(4|1)}$ reads as
\begin{equation}
\begin{gathered}
    Q\xi^{AA'} = -\frac{1}{2}(\rho^{A}_{\enspace B}\xi^{B\Aprime} + \rhobar^{\Aprime}_{\enspace\Bprime}\xi^{\Bprime A})+2\psibar^\Aprime\psi^A,\\
    Q\rho^{AB} = -\frac{1}{2} \rho^A_{\enspace C}\rho^{C B} + \frac{\lambda}{2}\xi^A_{\enspace A'}\xi^{A'B} +2g\psi^A\psi^B,\\
    Q\psi^A = -\frac{1}{2}\rho^A_{\enspace B}\psi^B - \frac{\gbar}{2} \xi^A_{\enspace A'}\psibar^\Aprime.
\end{gathered}
\end{equation}
It is clear that at $\lambda=0$ it reduces to~\eqref{Q-2c-spinors} corresponding to $\mathfrak{siso}(1,3)[1]$ algebra.

The above $Q$ is still compatible with the presymplectic structure~\eqref{SUGRA presymplectic structure}. To see this it is enough to check that the deformation is. The deformed Hamiltonian reads as:
\begin{multline}
    H = \frac{i}{4}(\xibar_{A'B'} \rhobar^{A'}_{\enspace C'} \rhobar^{C'B'}-\xi_{A B}\rho^{A}_{\enspace C} \rho^{C B}) -i \psibar^\Aprime\psi^A(\rho_{A}^{\enspace B}\xi_{BA'} - \rhobar_{A'}^{\enspace B'}\xi_{B' A}) +\\ 
    +2ig\xi_{AB}\psi^A\psi^B - 2i\gbar\xibar_{A'B'}\psibar^\Aprime\psibar^\Bprime+6\lambda\xitilde\,.
\end{multline}

Now taking $T[1]X$ as the source space, where $X$ is a 4-dimensional manifold, and considering $\mathbb{Z}$-degree preserving supermaps  the corresponding AKSZ action reads as
\begin{multline}
    S[e,\omega,\psi] = \int_X\left( \frac{1}{2} \epsilon_{abcd} (d\omega^{ab} + \omega^a_{\enspace e} \wedge \omega^{eb})\wedge e^c \wedge e^d - 2i \overbar{\psi} \wedge \gamma_5 \gamma_a (d\psi + \frac{1}{4}\omega^{bc}\gamma_{bc}\wedge \psi) \wedge e^a \right.+\\
    \left.+\frac{ig}{2}\epsilon_{abcd}\xi^a\xi^b\psibar\gamma^{cd}\psi+2\Lambda\epsilon_{abcd}e^ae^be^ce^d\right) d^4x\,,
\end{multline}
where we have switched to world indices and have assumed $g$ to be real. This coincides with the known action functional for the minimal AdS supergravity in 4 dimensions~\cite{Castellani:1991eu}.  Because the presymplectic structure is unchanged, all the arguments ensuring that the construction gives a full-scale BV formulation remain valid and we conclude that the model indeed gives a proper presymplectic BV-AKSZ formulation of the minimal $N=1,D=4$ AdS supergravity.


\section{Contractible space-time dimensions}
\label{sec:contract}

Before discussing the relation between the presymplectic BV-AKSZ formulation of supergravity 
and the group manifold approach~\cite{DAuria:1982mkx,Castellani:1991eu} let us make some general observations on how the source space of an AKSZ sigma model can be equivalently extended or reduced. 

\subsection{AKSZ at the level of equations of motion}
\label{sec:source-ext-eom}

Quite often it is convenient to formulate a gauge theory in terms of a certain extension of its space-time manifold because this helps to realise symmetries in a manifest and geometrical way. The well-known examples of this is the ambient-space formulation of (A)dS and conformal fields~\cite{Dirac:1935zz,Dirac:1936fq}, see also~\cite{Bengtsson:1990un,Buchbinder:2001bs,Sagnotti:2003qa,Bonelli:2003zu,Barnich:2006pc,Bekaert:2009fg}. More relevant in the present work are the superspace formulations of supersymmetric systems, where the supersymmery transformation of fields are induced by the super-diffeomophisms of the spacetime manifold, see e.g.~\cite{Castellani:1991eu,Wess:1992cp,Buchbinder:1998qv,Freedman:2012zz}.

A key observation which explains why AKSZ-like models are relevant in relating formulations of the same system in different spacetimes is that adding/removing a space-time dimension in such a way that the system remains essentially equivalent, is a natural geometrical operation~\cite{Barnich:2006hbb}, see also \cite{Grigoriev:1999qz,Vasiliev:2001zy,Vasiliev:2003ar,Barnich:2003wj} for earlier related considerations and
\cite{Bekaert:2009fg,Bonechi:2009kx,Cattaneo:2012qu,Bekaert:2012vt} for further developments. Let us briefly recall how this works at the level of equations of motion.

An AKSZ model at the level of equations of motion (nonlagrangian  AKSZ in what follows) is a pair: the source space $(T[1]X,\dx)$ and the target $(F,Q)$.  In general, one can consider sources which are not of the form $(T[1]X,\dx)$ but we postpone this discussion for Section~\bref{sec:generalAKSZ}. This data defines a local BV system whose space of field configurations is  $Smaps(T[1]X,F)$, i.e. supermaps from $T[1]X$ to $F$. The BV differential therein is a natural lift of $\dx+Q$ on $T[1]X\times F$ to the space of all (including those not preserving the ghost degree) supermaps.

Given a nonlagrangin AKSZ model one can replace its spacetime  $X$ with another manifold. It turns out that adding a contractible dimension, i.e. replacing $X$ with $X \times \fR^1$ so that the new source is $(T[1]X\times T[1]\fR^1,\dx+\dT)$, gives a new AKSZ model which is in a certain sense equivalent to the initial one. This new AKSZ model can be again considered as a gauge field theory on $T[1]X$ by identifying its field-antifield space $Smaps(T[1]X\times T[1]\fR^1,F)$ as
\begin{equation}
Smaps(T[1]X\times T[1]\fR^1,F)\simeq Smaps(T[1]X, Smaps( T[1]\fR^1,F))\,.
\end{equation}
If we allow ourselves to consider AKSZ sigma models with infinite-dimensional target spaces, the above identification defines a new AKSZ sigma model on $T[1]X$ whose target is itself a field-antifield space $Smaps(T[1]\fR^1,F)$ of the  AKSZ sigma model with source $(T[1]\fR^1,\dT)$
and target $(F,Q)$. Moreover, the new AKSZ sigma model on $T[1]X$ is equivalent to the initial one via the elimination of the nonlagrangian version~\cite{Barnich:2004cr,Barnich:2010sw,Grigoriev:2019ojp}  of the generalized auxiliary fields~\cite{Henneaux:1990ua}. Indeed, the initial target space $(F,Q)$ is in fact an equivalent reduction of the new target space $(Smaps( T[1]\fR^1,F),Q^\prime)$, where $Q^\prime$ is the lift of $\dT+Q$ to $(Smaps( T[1]\fR^1,F)$. More precisely, let $t,\theta^\prime$ be coordinates on $T[1]\fR^1$ and $\psi^A$ on $F$. A generic supermap can be written as $\psi_0(t)+\theta^\prime \psi_1(t)$ so that $\psi_0^A(t)$ and $\psi_1^A(t)$ are local coordinates on $Smaps(T[1]\fR^1,F)$. In terms of these coordinates the action of the BV-AKSZ differential $Q^\prime$ on $Smaps( T[1]\fR^1,F)$ reads as:
\begin{equation}
\label{c-pairs}
Q^\prime \psi_0^A(t)=Q^A(\psi_0(t))\,, \qquad Q^\prime \psi_1^A(t)=\dot \psi_0(t)+\ldots
\end{equation}
where $\ldots$ denote terms proportional to $\psi_1^A$. If we split coordinates $\psi^A(t)$ into $\psi_0^A(0)$ and $\dot\psi_0^A(t)$ the above relation imply that the surface in $(Smaps( T[1]\fR^1,F),Q^\prime)$ singled out by $\psi_1^A(t)=0$ and $\dot\psi_0^A(t)=0$ is an equivalent reduction (in the sense of $Q$-manifolds) and can be identified with $(F,Q)$. This in turn implies that the two AKSZ sigma models over $T[1]X$ are equivalent via the elimination of generalized auxiliary fields. In particular, the space of equivalence classes of solutions modulo gauge transformations must be the same. In applications to SUSY systems we also need a version where $t$ is Grassmann odd while $\theta^\prime$ is even. The proof is analogous.

The statement has a straightforward generalization to the case of so called gauge PDEs~\cite{Grigoriev:2019ojp}. Namely, given a gPDE $(E,Q,T[1]X)$ there is a natural gPDE $(E^\prime,Q^\prime,T[1]X\times T[1]\fR^1)$ whose total space is a product $Q$-manifold $(E,Q)\times(T[1]\fR^1,\dT)$ considered as a bundle over $T[1]X\times T[1]\fR^1$. In its turn, it can be seen as a gPDE over $T[1]X$
with typical fiber $Smaps( T[1]\fR^1,F)$, where $F$ is a typical fiber of $E$. A straightforward generalization of the above argument says that $(E,Q,T[1]X)$ can be seen as an equivalent reduction (in the sense of gPDEs, see \cite{Grigoriev:2019ojp}) of the extended one.

\subsection{Presymplectic AKSZ}

The considerations of the previous Subsection can be applied to Lagrangian AKSZ sigma models if we disregard the symplectic structure and consider them at the level of equations of motion. Indeed, for a usual AKSZ sigma model the BV-BRST differential can be defined without using the symplectic structure on the target space.

However, if we are interested in the full-scale Lagrangian AKSZ with the source  $(T[1]X,\dx)$, the ghost degree of the target space symplectic structure and $\dim{X}$ are related as $\gh{\omega^F}=\dim{X}-1$.
Of course, one can still repeat the AKSZ construction, resulting in the shifted analog of the BV data where the symplectic structure on $Smaps(T[1]X,F)$ has degree $\gh{\omega^F}-\dim{X}$ while the analog of the BV-AKSZ action carries degree $\gh{\omega^F}-\dim{X}+1$. In particular, if $\gh{\omega^F}=\dim{X}$ the shifted AKSZ sigma model defines a Hamiltonian BFV system. Moreover, its associated BV formulation can be immediately constructed as an AKSZ sigma model with $X$ replaced with $X\times \fR^1$ (time-line), as was originally observed in~\cite{Grigoriev:1999qz,Barnich:2003wj}. The converse is also true, given an AKSZ sigma model on $T[1]X\times T[1]\fR^1$, its BFV Hamiltonian description is again a BFV-AKSZ sigma model with source $(T[1]X,\dx)$, see~\cite{Barnich:2003wj,Grigoriev:2010ic}. In a more general setup of not necessarily AKSZ systems analogous relations were thoroughly studied in~\cite{Cattaneo:2012qu,Cattaneo:2015vsa}, where the BFV formulation is interpreted as defined on the boundary of the spacetime.

If we now switch to presymplectic AKSZ systems, most of the above remains correct but there are some subtleties related to the regularity of the involved presymplectic structures. If the target space presymplectic structure is regular, the system is equivalent to the standard AKSZ model, at least locally. Indeed, inspecting the construction of \cite{Grigoriev:2022zlq,Dneprov:2024cvt} one finds that the resulting BV formulation is that of the symplectic AKSZ sigma-model whose target is the symplectic quotient of the initial target (we disregard global geometry issues in this discussion). The nontrivial examples of presymplectic BV-AKSZ models arise when the presymplectic structure is not regular but is such that the induced BV symplectic structure is regular provided we exclude the degenerate field configurations. For instance, in the presymplectic BV-AKSZ formulation of gravity-like models~\cite{Grigoriev:2020xec,Dneprov:2022jyn,Dneprov:2024cvt} among the AKSZ fields one typically has the frame field which should be assumed nondegenerate. We say that presymplectic structure is quasi-regular if the induced presymplectic structure on the space of AKSZ fields is regular for admissible configurations. It is natural to require that addmissible configurations are defined in such a way that a small variation of an admissible configuration is also admissible. Although we formulate these requirements in terms of functional space $Smaps(T[1]X,F)$ the role of the base $X$ is passive and the analysis boils down to (locally) finite dimensional space $Smaps(T_x[1] X,F)$ for a generic point $x\in X$.

Just like in the case of usual AKSZ models one can also consider (possibly shifted) presymplectic BV-AKSZ systems with $\gh{\omega^F}$ generic. For instance, for $\gh{\omega^F}=\dim{X}$ the presymplectic AKSZ system defines a local BFV system provided $\omega^F$ is quasi-regular. Because we have in mind applications to supersymmetric field theories, we generally do not have a natural integration on the source space and hence a natural induced  presymplectic structure. However, if we are only interested in the description at the level of equation of motion we can replace presymplectic structure by its kernel distribution $\cK$ which is by definiton a distribution generated by the vector fields annihilating the presymplectic structure. In contrast to the presymplectic  structure, $\cK$ naturally induces an associated distribution (also called prolongation of $\cK$) on the space of AKSZ fields. If the induced distribution is regular one can pass to the quotient space which is naturally a (shifted) local BV system at the level of equations of motion, see~\cite{Grigoriev:2024ncm} for a systematic exposition. In the case where the induced presymplectic structure is present, this is equivalent to the standard prescription, i.e. when the local BV system arises on the symplectic quotient of the space of (admissible) configurations.   

It is natural to address the question of equivalence between a presymplectic AKSZ system and its extension by replacing the presymplectic structures with their associated kernel distributions. Of course, in so doing we forget about Lagranians/symplectic structures and speak about equivalence at the level of equations of motion. More specifically, we define a weak AKSZ system $(F,Q,\cK,T[1]X)$ by taking $(T[1]X,\dx)$ as a source and a weak $Q$-manifold $(F,Q,\cK)$ as the target.  Recall, that a weak $Q$-manifold is an almost $Q$-manifold equipped with an involutive distribution $\cK$ such that:
\begin{equation}
L_Q\cK \subset \cK\,, \qquad Q^2\in \cK\,.
\end{equation}
In addition, the distribution $\bar\cK$ induced by $\cK$ on admissible configurations from $Smaps(T[1]X,F)$ is required to be regular. 
The physical interpretation of a weak AKSZ system is that of the BV system on the quotient of $Smaps(T[1]X,F)$ by $\bar \cK$. The BV-BRST differential on the quotient is induced by the lift of $\dx+Q$ to $Smaps(T[1]X,F)$. Of course weak AKSZ is just a particular case of so-called weak gauge PDEs introduced recently in~\cite{Grigoriev:2024ncm} to which we refer for further details and proofs.

Given a weak AKSZ model $(F,Q,\cK,T[1]X)$ let us consider its extension to $T[1]X \times T[1]\cT$, where now $T$ is either $\fR$ or $\Pi\fR$ (i.e. odd line). We also allow $X$ to be a supermanifold. Just like above, the extension can be considered as a weak AKSZ model on $T[1]X$ with the target being $Smaps(T[1]\cT,F)$. The target space $Smaps(T[1]\cT,F)$  is naturally an almost $Q$-manifold equipped with the compatible distribution $\cK^\prime$: indeed $Q^\prime$ is the lift of $Q+\dT$ to $Smaps(T[1]\cT,F)$ while $\cK^\prime$ is the lift of $\cK$. In this way we arrive at the new weak AKSZ model with an infinite-dimensional target $(Smaps(T[1]\cT,F),Q^\prime,\cK^\prime)$.  

Suppose that for the initial weak AKSZ model $(F,Q,\cK,T[1]X)$ we have a notion of admissible configurations and $\bar\cK$ is regular on the space of admissible configurations. A configuration of the extended weak AKSZ model can be represented as an element of $Smaps(T[1]X \times T[1]\cT,F)$. We call it admissible if its restriction to any point of the body of $T[1]\cT$ is admissible as an element of $Smaps(T[1]X,F)$. For instance if $\xi^a,\gh{\xi^a}=1$ are coordinates on $F$ encoding the frame field $e^a_\mu$ and admissible configurations are such that $\hat\sigma^*(\xi^a)=e^a_\mu(x)\theta^\mu+\ldots$ with $e^a_\mu(x)$ invertible then the configuration of the extended model is admissible if $e^a_\mu(x,t)$ defined through $\hat\sigma^*(\xi^a)=e^a_\mu(x,t)\theta^\mu+e^a_0(x,t)\theta^\prime+\ldots$ is invertible for all $t$ (in this example we assumed that $t$ is even).
We have the following:
\begin{prop}
Let $(F,Q,\cK,T[1]X,\dx)$ be a weak AKSZ model such that the induced distribution $\bar\cK$ is regular on admissible configurations. Let also $\cK^\prime$ be the distribution induced by $\cK$ on $Smaps(T[1]\cT,F)$ then the distribution $\bar\cK^\prime$ induced by $\cK^\prime$ on $Smaps(T[1]X,Smaps(T[1]\cT,F))$ is regular on admissible configurations. 
\end{prop}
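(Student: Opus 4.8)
\medskip
\noindent\textbf{Proof strategy.}
The plan is to reduce the claim to a single local prolongation question and settle it by a normal-form computation, mirroring Proposition~\ref{Sugra-presymplectic-form-regularity-distribution-continuation} and the general statement following it but now bootstrapping from the assumed regularity of $\bar\cK$. First I would use the canonical identifications $Smaps(T[1]X,Smaps(T[1]\cT,F))\cong Smaps(T[1]X\times T[1]\cT,F)\cong Smaps(T[1]\cT,Smaps(T[1]X,F))$ together with the fact that prolongation of vector fields, hence of distributions, is associative and bracket preserving; this identifies $\bar\cK'$ with the prolongation of the already regular and involutive distribution $\bar\cK$ on $M:=Smaps(T[1]X,F)$ to $Smaps(T[1]\cT,M)$. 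Since the base $X$ is passive, I would pass to a generic point of $X$ and work with the finite-dimensional $\bar F=Smaps(T_x[1]X,F)$; admissibility of an extended configuration being, by definition, admissibility at every point of $\mathrm{body}(T[1]\cT)$, the whole analysis stays within the open locus where $\bar\cK$ is regular. The two cases $\cT=\fR$ ($t$ even) and $\cT=\Pi\fR$ ($t$ odd) are handled in parallel: coordinates on $Smaps(T[1]\cT,\bar F)$ comprise, for each coordinate $z^A$ on $\bar F$, its components in the expansion along $t$ and $\theta'$.

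Next, using that $\bar\cK$ is regular and involutive, the graded Frobenius theorem provides local coordinates $(y^\alpha,w^i)$ on the admissible part of $\bar F$ with $\bar\cK=\langle\partial/\partial w^1,\dots,\partial/\partial w^s\rangle$. In these coordinates $\bar\cK'$ is generated by the prolongations of the $C^\infty(\bar F)$-multiples of the $\partial/\partial w^i$; each such prolongation acts only on the $w^i$-type components and annihilates the $y$-type ones, so $\bar\cK'$ sits inside the integrable distribution that varies the $w$-components and fixes the $y$-components. The core of the proof is to exhibit a locally free generating set of $\bar\cK'$: the natural candidates are $\widehat{\partial/\partial w^i}$, the prolongations $\widehat{z^B\,\partial/\partial w^i}$, and finitely many of their images under the total derivative induced from $\dT$, after which one checks both generation (completeness of the $t$-expansion of $C^\infty(\bar F)$ in the $w$-sector) and independence over $C^\infty(Smaps(T[1]\cT,\bar F))$ (non-degeneracy, at a generic admissible point, of the Jacobian-type blocks along the distinguished $w$-directions). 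Equivalently, and this is the more robust route matching the paper's earlier method, one fixes a generic admissible body point $P$, extends each vector of the pointwise kernel space $\bar\cK'_P$ to a vector field in $\bar\cK'$ --- prolonging a $\bar\cK$-vector field for its $z_0$-components and a vector field proportional to the degree-raising coordinates, which vanishes at $P$, for the remaining components, exactly the mechanism of the correction in Lemma~\ref{lemma:kernel} and of the $\xi^{a_1}\cdots\xi^{a_k}$ trick --- obtaining a regular subdistribution $\bar\cK''\subseteq\bar\cK'$ with $\bar\cK''|_P=\bar\cK'_P$, and then applies the ``rank cannot drop off the body'' argument to conclude $\bar\cK'=\bar\cK''$ near $P$.

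The step I expect to be the real obstacle is this middle one: controlling the prolongation through the entire jet tower in both parities of $t$ so as to pin down a free generating set, or equivalently to keep the extensions of $\bar\cK'_P$ inside $\bar\cK'$ and mutually independent, all the while respecting admissibility at each point of $\mathrm{body}(T[1]\cT)$ and staying local in the infinite-dimensional $Smaps(T[1]\cT,\bar F)$. This is precisely what the contractible-dimension extension of weak gauge PDEs developed in~\cite{Grigoriev:2024ncm} provides, so in practice I would either carry out the coordinate computation explicitly or just specialize that general result.
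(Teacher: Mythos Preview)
Your first paragraph already contains the whole proof as the paper gives it. The paper's argument is exactly the reidentification
\[
Smaps(T[1]X,Smaps(T[1]\cT,F))\;\cong\;Smaps(T[1]\cT,Smaps_{\mathrm{ad}}(T[1]X,F)),
\]
where the target is restricted to admissible configurations (this is where the paper uses the definition of admissibility for the extended model as admissibility at every point of $\mathrm{body}(T[1]\cT)$), followed by the remark that the prolongation of a regular distribution $\bar\cK$ on $M=Smaps_{\mathrm{ad}}(T[1]X,F)$ to $Smaps(T[1]\cT,M)$ is regular. That is the entire content of the paper's proof; it does not invoke Frobenius explicitly, nor the body-point extension technique, nor~\cite{Grigoriev:2024ncm}.

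Your second and third paragraphs are not wrong, but they are machinery aimed at a harder problem than the one at hand. Once you are at the stage ``prolong a regular distribution $\bar\cK$ from $M$ to $Smaps(T[1]\cT,M)$'', the conclusion is immediate in the adapted coordinates $(y^\alpha,w^i)$ you yourself produce: the prolongation is freely generated by the $\partial/\partial w^i_{(k)}$ for each component $w^i_{(k)}$ in the finite $t,\theta'$ expansion, and there is nothing to check. The ``extend the pointwise kernel from a body point and use that rank cannot drop'' route you describe as ``more robust'' is the technique used in Proposition~\ref{Sugra-presymplectic-form-regularity-distribution-continuation} precisely because there one does \emph{not} start from a regular target-space distribution; here you do, so that detour is unnecessary. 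In short: your reduction is the paper's reduction, and after it the statement is a tautology in adapted coordinates rather than something requiring the Lemma~\ref{lemma:kernel} mechanism.
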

In other words the extended weak AKSZ model has a natural notion of addmissible configurations and the induced distribution is quasiregular.  The proof is based on representing the space of admissible configurations for the extended model as  $Smaps(T[1]\cT,Smaps_\mathrm{ad}(T[1]X,F))$, where in the target we take only admissible configurations. As $\bar\cK$ is regular on $Smaps_{\mathrm{ad}}(T[1]X),F)$ by assumption, distribution $\bar\cK^\prime$ is regular as well.

The almost $Q$-structure $Q^\prime$ on $Smaps(T[1]\cT,F)$ is given by the natural lift of differential $d_T+Q$ to $Smaps(T[1]\cT,F)$. If, as before, we use coordinates $\psi^A(t,\theta^\prime)$ and assume $t$ fermionic,  we have $\psi^A(t,\theta^\prime)=\psi^A_0(\theta^\prime)+t \psi^A_1(\theta^\prime)$. In terms of $\psi^A_0(\theta^\prime)$ and $\psi^A_1(\theta^\prime)$ the action of $Q^\prime$ reads as:
\begin{equation}
Q^\prime \psi^A_0 (\theta^\prime)=\theta^\prime \psi^A_1(\theta^\prime)+Q^A(\psi_0(\theta^\prime))\,, \qquad 
Q^\prime \psi^A_1 (\theta^\prime)= \psi_1^B(\theta^\prime)\ddl{Q^A}{\psi^B}\Big|_{\psi=\psi_0(\theta^\prime)}.
\end{equation}
Because functions in $\theta^\prime$ are polynomials it is clear that all the coefficients in the expansion of $\psi_1$ in $\theta^\prime$ and all the coefficients in $\psi_0(\theta^\prime)$ save for $\psi_0(0)$ form contractible pairs. Moreover, because by assumption $\overbar{\cK}$ is regular one can assume that we work in the adapted coordinate system and hence the elimination can be performed after taking the quotient. After the elimination we are back to the local BV system determined by the initial weak AKSZ model. This shows that the extended weak AKSZ system (seen as a weak AKSZ model with the same source and the target $Smaps(T[1]\cT,F)$) is equivalent to the initial one as a local gauge theory. 

It is worth giving the following corollary of the above considerations. Let $(F,Q,\omega^F),(T[1]X,\dx)$ be a presymplectic BFV-AKSZ sigma model (i.e. $\gh{\omega^F}=\dim{X}$) and $\omega^F$ be quasi-regular. Consider the BV-AKSZ model obtained by extending the source $T[1]X$ to $T[1](X\times \fR^1)$. It defines a local BV system that is equivalent to the BV reformulation of the initial BFV system encoded in $(F,Q,\omega^F),(T[1]X,\dx)$. This statement can be considered as an extension of the analogous relation~\cite{Grigoriev:1999qz,Barnich:2003wj} known in the context of standard (symplectic) AKSZ models. The proof is based on the fact that for a regular presymplectic structure taking the symplectic quotient of $\bar F$ and passing to $Smaps(T[1]\fR^1,\bar F)$ commutes.     

\subsection{Generalization}\label{sec:generalAKSZ}

In fact adding/removing space-time dimension for the AKSZ model is a particular example of a slightly more general equivalence. First of all, we recall that a nonlagragian AKSZ model can be defined for a generic source $Q$-manifold $(\manX,\delta)$. If in addition $\manX$ is equipped 
with a $\delta$-invariant volume form of degree $-k$ and target space $F$ is equipped with a symplectic structure of degree $k-1$, the AKSZ construction still produces a local BV system. Of course, this has a straightforward generalisation to the presymplectic case.

More generally, one can replace $(\cC^\infty(\manX),\delta)$ with a rather general differential graded commutative algebra (DGCA) $(\algA,\delta)$. In reasonable cases the space of supermaps from $Spec(\algA)$ to $\cF$ (we keep using geometrical language and think of the space of super-homomorphsims from $\cC^{\infty}(F) \to \algA$ as the space of supermaps from $Spec(\algA)$ to $F$) is a graded supermanifold and $\delta+Q$ makes it into a $Q$-manifold, which in turn, defines a BV system. The Lagrangian version is constructed by taking $\algA$ equipped with a non-degenerate and $\delta$-invariant trace $\tr$ and $F$ with a symplectic structure of a compatible degree. This generalisation was proposed in~\cite{Bonechi:2009kx}, see also \cite{costello2011renormalization,Bonechi:2022aji,Ben-Shahar:2024dju}.

As we are first of all interested in local field theories we restrict ourselves to the case where $\algA$ is a module over $\cC^\infty(X)$ for some smooth (super)manifold $X$ which eventually plays the role of the (super)spacetime. In addition we also assume that seen as a $\cC^\infty(X)$-module, $\algA$ is a space of sections of a vector bundle over $X$ whose typical fiber is a graded commutative algebra $\mathfrak{a}$.  Note that $\delta$ is a 1st order differential operator on $X$. 

The discussed above relations between AKSZ sigma model on $T[1] X$ and its extension to  $T[1](X\times \cT)$, where $\cT$ is either $\fR^1$ or $\Pi\fR^1$, can be generalised as follows. Suppose that $\algA$ admits an additional degree such that the expansion of $\delta$ into homogeneous components reads as
\begin{equation}
\delta=\delta_{-1}+\delta_{0}+\delta_{1}+\ldots\,,
\end{equation}
and that $\delta_{-1}$ is $\cC^\infty(X)$-linear and constant rank. It follows,  cohomology of $\delta_{-1}$ is again a vector bundle and its space of sections is a graded commututative algebra $\tilde\algA$. Moreover, the differential $\delta$ induces a differential $\tilde \delta$, making $\tilde\algA$ into a differential graded commutative algebra (DGCA). One can derive $\tilde\delta$ using the usual spectral sequence arguments or as a version of equivalent reduction for the first-quantized BRST systems, see~\cite{Barnich:2004cr} for more details. It follows, the generalised AKSZ models with sources $(\algA,\delta)$ and $(\tilde\algA,\tilde\delta)$ are equivalent in the sense of elimination of generalized auxiliary fields.

Let us now turn to Lagrangian systems. Let $(\algA,\delta)$ be equipped with the trace $\tr :\algA \to \fR$ that carries ghost degree $-k$ and satisfies $\tr(\delta f)=0$. If the target space is equipped with the
$Q$-invariant presymplectic structure of degree $k-1$, the space of supermaps $Spec(\algA) \to F$ inherits the degree $-1$ presymplectic structure compatible with the total $Q$-structure which the lift of $\delta+Q$ to the space of supermaps. Assuming the induced presymplectic structure to be regular and taking a symplectic quotient of the space of supermaps gives a local BV system. In this way we have arrived at the generalisation of the presymplectic BV-AKSZ formulation, which is going to be useful in describing supersymmetric systems.  Note that even if the target space presymplectic structure is nondegenreate the presymplectic structure on the space of supermaps can degenerate because of the zero modes of the trace so that even in this case the system is to be interpreted as a presymplectic AKSZ.  

Let us present an explicit coordinate form of the main structures. For simplicity let us assume that $\algA$ is formed by functions on $X$ with values in the graded commutative algebra $\mathfrak{a}$ which we assume to be a subquotient of the polynomial algebra.  Moreover we assume that the trace factorises as $\tr{f}=\int d^nx \tr_0(f)$, where $\tr_0$ is the trace on $\mathfrak{a}$. If $e_i$ denotes a basis in $\mathfrak{a}$ a generic section $a \in \algA$ can be written as $f=f^i(x)e_i$. The trace defines a graded symmetric bilinear form $k_{ij}=\tr_0(e_ie_j)$. Now, if $\psi^A$ are coordinates on $F$, a generic supermap $Spec(\algA) \to F$ can be written as
\begin{equation}
\hat\sigma^\ast(\psi^A)=\psi^{Ai}(x)e_i\,, \qquad \gh{\psi^{Ai}}=\gh{\psi^A}-\gh{e_i}\,, \quad \p{\psi^{Ai}}=\p{\psi^A}-\p{e_i}\,\, \mathrm{mod} \,2 \,.
\end{equation}
It is convenient to identify the space of supermaps $Smaps(Spec(\algA),F)$ with $Smaps(X,\bar F)$ where $\bar F$ is the space of supermaps from $Spec(\mathfrak{a})$ to $F$. Note that $\psi^{Ai}$ can be identified with coordinates on $\bar F$. In this coordinate system the expression for the induced presympelctic structure at $\hat\sigma$ reads as:
\begin{equation}
\omega^{\bar F}_{\hat\sigma}=\tr_0\left(\hat\sigma(\omega^F_{AB}) d\psi^{Ai}e_i d\psi^{Bj}e_j\right)\,.
\end{equation}
In particular, if $\omega^F_{AB}=const$ one gets $\omega^{\bar F}=k_{ij}\omega^F_{AB}d\psi^{Ai} d\psi^{Bj} $. Note that if the basis $e_i$ splits into $(e_a,e_\alpha)$ such that $\tr_0(e_ae_b)=0=\tr_0(e_\alpha e_b)$ and
$k_{\alpha\beta}\equiv \tr_0(e_\alpha e_\beta)$ is invertible then the linear span of $e_a$ is an ideal in $\alga$. It is clear that all the fields $\psi^{Ab}$ associated to the ideal will be in the kernel of the presymplectic sructure. Of course there can be extra degeneracy originating from the kernel distribution of $\omega^F$.

The above considerations suggest that the general situation is well described by the following generalization of presymplectic BV-AKSZ sigma models: the base is a DGCA $(\algA,\delta)$  which is a module over $\cC^\infty(X)$ and the target is a weak presymplectic $Q$-manifold $(F,Q)$. It follows $\delta$ preserves the ideal in $\algA$ generated by the zero modes of $\tr$ and hence is well defined on the quotient by the ideal. The quotient is then a DGCA with the nondegenerate trace. In fact, it is enough to require that the image of $\delta^2$ belongs to the ideal. This is analogous to requiring $i_{Q^2}\omega^F=0$ in the presymplectic context. 

We see that there are two alternative approaches to describe a Lagrangian system in terms of the extended (i.e. associated to superspace) source space. The first one is to define the nondegenerate $\tr$ only on the equivalent reduction of $\algA$ to the cohomology of $\delta_{-1}$ while the second is to define the degenerate trace on $\algA$ in such a way that the quotient of $\algA$ by the ideal determined by the trace coincides with the cohomology of $\delta_{-1}$. In the second case, instead of taking the quotient by the ideal one can  immediately construct the presymplectic BV formulation on $Smaps(Spec(\algA),F)$ and then take the symplectic quotient to arrive at the underlying local BV system. We illustrate both approaches in the example of  superspace formulation of the supersymmetric particle in Section~\bref{sec:particle}.


\section{Presymplectic BV-AKSZ and rheonomy}
\label{sec:rheonomy}
\subsection{Palatini-Cartan-Weyl gravity}
The presymplectic BV-AKSZ formulation of gravity reviewed in Section~\bref{pAKSZ-Gravity} can be lifted to higher-dimensional space-time manifolds if one considers it at the level of equations of motion. More precisely, one reinterprets the presymplectic AKSZ model as a weak AKSZ model for which adding an extra contractible spacetime direction is merely an equivalence.  
Of course, if one is only interested in the equations of motion this is almost obvious because the equations of motion can be written in terms of the de Rham differential and the wedge product of spacetime differential forms and hence can be formally considered on any spacetime, see e.g.~\cite{Castellani:1991et}. We now uplift the presymplectic-AKSZ formulation described in section \bref{pAKSZ-Gravity} from $n$-dimensional spacetime to an extended space $P=X\times \fR^{n(n-1)/2} \to X$. The second factor will be then identified with the neighborhood  of the identity in the Lorentz group $SO(n-1,1)$. The adapted coordinates on $T[1]P$ are denoted as $(x^\mu, \atheta^\mu, x^i, \atheta^i)$ and the de Rham differential as $\dP$.

Let $\sigma: T[1]P \to \mathfrak{iso}(1,n-1)[1]$ be a given configuration satisfying the equations of motion. In local coordinate terms it reads as 
\begin{equation}
    \sigma^\ast(\xi^a) = \tilde{e}^a = \tilde{e}^a_\mu(x^\nu,x^i)\atheta^\mu + \tilde{e}^a_i(x^\nu,x^i)\atheta^i, \quad \sigma^\ast(\rho^{ab}) = \tilde{\omega}^{ab} =  \tilde{\omega}^{ab}_\mu (x^\nu,x^i)\atheta^\mu + \tilde{\omega}^{ab}_i(x^\nu,x^j)\atheta^i,
\end{equation}
and we assume  that $(\tilde{e}^a,\,\,\tilde{\omega}^{bc})$ define a nondegenerate coframe on $P$. Equations of motion \eqref{pAKSZ-eq} take the following form:
\begin{equation}\label{pAKSZ-eq-gravity-ext}
    \mathcal{V}_{abc}(e)\tilde{R}^{bc} =0, \quad \mathcal{V}_{abc}(e)\tilde{T}^c = 0,
\end{equation}
where $\tilde{R}^{ab} = \dP\tilde{\omega}^{ab} + \tilde{\omega}^a_{\enspace c}\tilde{\omega}^{cb}$ and  $\tilde{T^a} = \dP \tilde{e}^a + \tilde{\omega}^a_{\enspace b}\tilde{e}^b$. Expanding 2-forms $\tilde{R}^{ab}$ and $\tilde{T^a}$ in the coframe $(\tilde{e}^a, \tilde{\omega}^{ab})$ as
\begin{equation}
\begin{gathered}
    \tilde{R}^{ab} = \tilde{R}^{ab}_{\enspace\enspace cd|}\tilde{e}^c \tilde{e}^d + \tilde{R}^{ab}_{\enspace\enspace c|de} \tilde{e}^c \tilde{\omega}^{de} + \tilde{R}^{ab}_{\enspace\enspace |cdef}\tilde{\omega}^{cd}\tilde{\omega}^{ef},\\
    \tilde{T}^{a} = \tilde{T}^{a}_{\enspace cd|}\tilde{e}^c \tilde{e}^d + \tilde{T}^{a}_{\enspace c|de} \tilde{e}^c \tilde{\omega}^{de} + \tilde{T}^{a}_{\enspace |cdef}\tilde{\omega}^{cd}\tilde{\omega}^{ef}
\end{gathered}
\end{equation}
and substituting them into \eqref{pAKSZ-eq-gravity-ext} one finds:
\begin{subequations}
\begin{equation}\label{gravity-einst-eq}
    \tilde{R}^{ac}_{\enspace\enspace bc} - \frac{1}{2}\delta^a_b\tilde{R}^{cd}_{\enspace\enspace cd} = 0, \quad \tilde{T}^{a}_{\enspace cd|} = 0,
\end{equation}
\begin{equation}\label{gravity-rheonomy-eq}
    \tilde{R}^{ab}_{\enspace\enspace c|de} = \tilde{R}^{ab}_{\enspace\enspace |cdef} = \tilde{T}^{a}_{\enspace c|de} = \tilde{T}^{a}_{\enspace |cdef} =0\,,
\end{equation}
\end{subequations}
so that $\tilde{T}^a = 0$.

Let us consider a distribution $\mathcal{R}$ generated by all the vector fields annihilating $e^a$, i.e. vector fields $V$ satisfying $i_Ve^a=0$. In fact, $\mathcal{R}$ is involutive because $\sigma$ is a solution. Indeed, for any $X, Y \in \mathcal{R}$ one has:
\begin{equation}
\begin{gathered}
       e^a([X,Y]) = e^a(\mathcal{L}_XY) = \mathcal{L}_X\left(e^a(Y)\right) - (\mathcal{L}_Xe^a)(Y) = -i_Y(di_X + i_Xd)e^a =\\
       = -i_Yi_X (T^a-\omega^a_{\enspace b}e^b) = -i_Yi_X T^a = 0. 
\end{gathered}
\end{equation}
It follows $P$ is foliated by the integral submanifolds of $\mathcal{R}$ and hence locally $P$ is naturally a fiber bundle over $X$. Note that the bundle structure is determined by the choice of solution $\sigma$.

It is convenient to chose coordinates $x^\mu,x^i$ adapted to the bundle structure, i.e. $x^\mu$ are pullbacks of the coordinates on $X$ and $x^i$ are fiber coordinates. Note that in this coordinate system $e^a_i = 0$.
Now it is clear that equations  \eqref{gravity-einst-eq} pulled back to any section $X\to P$ is precisely the Einstein equation and the zero torsion condition. At the same time equations \eqref{gravity-rheonomy-eq} set to zero the remaining components of the curvature and the torsion and are known as the rheonomy conditions in the group manifold approach to (super)gravity \cite{DAuria:1982mkx,Castellani:1991et}. The pullback of $\omega^{ab}$ to a fiber of $P$ gives a nondegenerate  flat $so(n-1,1)$-connection and hence one is given with the $so(n-1,1)$ action on the fibers. In other words, locally we are dealing with the principle $SO(n-1,1)$-bundle over $X$. The $iso(n-1,1)$-valued 1-form defines a Cartan connection on $P$ and hence the structure of Cartan geometry modeled on $ISO(n-1,1)/SO(n-1,1)$. A recent discussion of the Cartan geometry perspective on supergravity can be found in \cite{Francois:2024rfh}. For a systematic exposition of Cartan geometry see e.g.~\cite{sharpe2000differential,Wise2009,cap2009parabolic}.



\subsection{N=1 D=4 Supergravity}

Let us now apply the analogous lift to the presymplectic BV-AKSZ formulation of supergravity from Section~\bref{sec:sugra}. Although one could again uplift the theory to the corresponding ``soft'' supergroup, it is more instructive and economical to uplift to the superspace which in the flat case can be identified with   a coset of the super-Poincar\'e group by the Lorentz subgroup. More precisely, consider superspace $M = \mathbb{R}^{4|4}$ with coordinates $(x^\mu, \stheta^\alpha)$. Let $\sigma: T[1]M \to \mathfrak{siso(1,3)}[1]$ be a fixed section: 
\begin{equation}
\begin{gathered}
    \sigma^\ast(\psi^\alpha) = \tilde{\psi}^\alpha = \tilde{\psi}^\alpha_\mu (x^\nu, \stheta^\beta) \atheta^\mu +  \tilde{\psi}^\alpha_{\enspace\gamma} (x^\nu, \stheta^\beta)\astheta^\gamma,\\
    \sigma^\ast(\xi^a) =\tilde{e}^a= \tilde{e}^a_\mu (x^\nu, \stheta^\beta) \atheta^\mu +  \tilde{e}^a_\gamma (x^\nu, \stheta^\beta)\astheta^\gamma,\\
    \sigma^\ast(\rho^{ab}) = \tilde{\omega}^{ab} = \tilde{\omega}^{ab}_\mu (x^\nu, \stheta^\beta) \atheta^\mu +  \tilde{\omega}^{ab}_\gamma (x^\nu, \stheta^\beta)\astheta^\gamma,
\end{gathered}
\end{equation}
such that $(\tilde{e}^a,\tilde{\psi}^\alpha)$ defines a nondegenerate coframe on $M$. Equations of motion \eqref{pAKSZ-eq} for $\sigma$ take the following form:
\begin{equation}\label{pAKSZ-sugra-eq-ext}
\begin{gathered}
    \tilde{R}^a =  \tilde{T}^a - \frac{1}{2} \overbar{\tilde{\psi}} \gamma^a \tilde{\psi} = 0,\\
   2\tilde{R}^{ab} \tilde{e}^c \epsilon_{abcd} - 4i\overbar{\tilde{\psi}} \gamma_5\gamma_d \tilde d_\Gamma \tilde{\psi} = 0,\\
    8\gamma_5 \gamma_a \tilde d_\Gamma \tilde{\psi} \tilde{e}^a + 4\gamma_5 \gamma_a \tilde{\psi} \tilde{R}^a = 0.
\end{gathered}
\end{equation}
Expanding the curvature 2-forms $R^a,R^{ab},R^\alpha\equiv (\tilde d_\Gamma\tilde \psi)^\alpha$ in the basis of  $(\tilde{e}^a,\tilde{\psi}^\alpha)$ the equations of motion \eqref{pAKSZ-sugra-eq-ext} give the spacetime equations of motion and the rheonomy conditions~\cite{Castellani:1991eu}.


\section{$\mathcal{N}=1$ supersymmetric particle and presymplectic BV-AKSZ}
\label{sec:particle}
In this section we use a toy-model to demonstrate how the presymplectic BV-AKSZ formalism can be employed to reconstruct the superspace Lagrangians and their BV extensions in a geometrical way. 

\subsection{$\mathcal{N}=1$ supersymetric mechanics}
Our toy-model is the  $\mathcal{N}=1$ superparticle. The respective SUSY algebra is:
\begin{equation}
    \{\mathcal{Q}, \mathcal{Q}\} = 2\mathcal{P}.
\end{equation}
It can be realised via the following vector fields 
\begin{equation}
    \mathcal{Q} = \partial_\stheta + \stheta \partial_t, \quad \mathcal{P} = \partial_t
\end{equation}
on the superspace $\mathbb{R}^{1|1}$ with coordinates $(t, \stheta)$. Moreover, on $\mathbb{R}^{1|1}$ one also introduces chiral derivative:
\begin{equation}\label{Chiral derivative}
    \D = \partial_\stheta - \stheta\partial_t, \quad \{\mathcal{Q}, \D\} = 0.
\end{equation}

The action for the particle model can be written in a manifestly supersymmetric way in terms of superfields $X^a(t,\stheta)$:
\begin{equation}\label{N=1 SUSY QM superspace action}
    S[X] = -\frac{1}{2}\int_{\mathbb{R}^{1|1}} dt d\stheta (\dot{X}_a \mathcal{D} X^a),
\end{equation}
where $\dot{X}^a \equiv  \partial_t X^a$. It is invariant under susy transformations 
\begin{equation}\label{N=1 SUSY QM superspace SUSY transform}
    \delta X^a(t,\stheta) = \varepsilon \mathcal{Q}X^a(t,\stheta) = \varepsilon( \partial_\stheta + \stheta \partial_t)X^a\,,
\end{equation}
where $\varepsilon$ is a constant fermionic parameter. In terms of the component fields introduced via
\begin{equation}
    X^a(t,\stheta) = x^a(t) + \stheta \lambda^a(t),
\end{equation}
the above transformations read as:
\begin{equation}\label{N=1 SUSY QM component SUSY transform}
    \delta x^a = \varepsilon \lambda^a, \quad \delta \lambda^a = -\varepsilon\dot{x}^a\,.
\end{equation}
Finally, performing integration over $\stheta$ one obtains the component action:
\begin{equation}\label{N=1 SUSY QM component action}
    S[x,\lambda] = \frac{1}{2}\int dt  \left(\dot{x}^2 + \lambda_a\dot{\lambda}^a\right)\,.
\end{equation}
Further details and examples on supersymmetric (quantum) mechanics can be found in e.g.~\cite{Smilga:2020nte}.

\subsection{AKSZ model for $\mathcal{N}=1$ supersymmetric particle}

The supersymmetry invariance of the model from the previous subsection can be gauged by coupling it to the 1d supergravity. The resulting model is usually called $\mathcal{N}=1$ supersymmetric particle and is most naturally constructed in the Hamiltonian BFV formalism by introducing ghosts associated to 1d supersymmetry transfomations. In the Lagrangian picture these ghosts give rise to the einbein and gravitino fields in 1 dimension. Quantization of the fermionic phase-space coordinates gives a spinor representation so that the model indeed describes a spinning particle. The field theoretical interpretation of the quantized model is that it gives a first-quantized description of the massless spinor field in Minkowski space~\cite{Henneaux:1987cpbis,Howe:1988ft}. 

The phase space of the Hamiltonian constrained system is the super-extension of $T^*\fR^d$ by fermionic coordinates $\lambda^a$, equipped with the symplectic structure $dp_a dx^a - \frac{1}{2}d\lambda_a d\lambda^a$. The 1st class constraints generating the supersymmetry transformations are
\begin{equation}
\half p^2\equiv  \half p_ap^a\,, \qquad (p\cdot \lambda)\,.
\end{equation}
Here and below we use $(A\cdot B)\equiv A^cB_c$ to denote the invariant pairing of Minkowski space (co)vectors. Introducing the respective ghost variables $\xi,\epsilon(\xi)=0$ and $\psi,\epsilon(\psi)=1$ as well as their canonically conjugated ghost momenta $\cP_\xi$ and $\cP_\psi$, the BRST charge of the system can be written as:
\begin{equation}
H = -\frac{p^2}{2}\xi - \psi (p\cdot\lambda) - \mathcal{P}_\xi \psi\psi\,.
\end{equation}
Together with the extended symplectic structure $\omega^F=d\chi$, $\chi=p_a dx^a - \frac{1}{2}\lambda_a d\lambda^a + \mathcal{P}_\xi d\xi - \mathcal{P}_\psi d\psi$ these data defines a degree zero symplectic $Q$-manifold $(F,Q,\omega^F)$, where $Q$ is determined by $i_Q\omega^F+dH=0$ and is given explicitly by:
\begin{equation}
\begin{gathered}
    Qx^a = p^a\xi + \psi \lambda^a,\qquad
    Qp^a = 0,\qquad 
    Q\lambda^a = -\psi p^a,\\
    Q\xi = \psi \psi,\qquad Q\psi =0,\\
    Q\mathcal{P}_\xi = \frac{1}{2}p^2,\qquad
    Q\mathcal{P}_\psi = (\lambda\cdot p) + 2\mathcal{P}_\xi\psi\,.
\end{gathered}
\end{equation}

The BV formulation of the system is given by the AKSZ sigma model with source $T[1]\fR^1$ and target $(F,Q,\omega^F)$.
Let $t,\theta$, $\gh{\theta}=1$ denote the source space coordinates so that $\dx=\theta\dl{t}$. Introducing notations for the ghost degree preserving supermaps  $\sigma: T[1]\mathbb{R}^{1|0} \to F$ via
\begin{equation}
\begin{gathered}
    \sigma^\ast(p^a) = p^a(t), \quad \sigma^\ast(x^a) = x^a(t), \quad \sigma^\ast(\lambda^a) = \lambda^a(t),\\
        \sigma^\ast(\xi) = \atheta e(t), \quad \sigma^\ast(\psi) = \atheta\psi(t), \\
        \sigma^\ast(\mathcal{P}_\xi) =\sigma^\ast(\mathcal{P}_\psi) = 0\,,
\end{gathered}
\end{equation}
the AKSZ action takes the following form:
\begin{equation}
\label{ext-hamiltonian}
S[x,p,\lambda,e,\psi]=\int dt\left((p\cdot\dot x)+\half (\lambda \cdot \dot \lambda) -e \frac{p^2}{2}-\psi (\lambda\cdot p) \right) \,.  
\end{equation}
This is of course the usual extended Hamiltonian action for the above constrained system. The einbein $e$ and gravitino $\psi$ enter the action as Lagrange multipliers associated to the respective constraints.

The initial model considered in the previous subsection is recovered by setting gauge fields $e,\psi$ to their background values: 
\begin{equation}\label{N=1-D=1-SUGRA-backgorund-worldline}
    e(t) = 1,\quad  \psi(t)=0\,.
\end{equation}
In particular, the gauge transformations preserving this background are the starting point supersymmetry transformations~\eqref{N=1 SUSY QM component SUSY transform} rewritten in the 1st order formulation. Indeed, these are given explicitly by
\begin{equation}
    \delta x^a(t) =  \varepsilon\lambda^a(t),  \quad \delta p^a(t) = 0, \quad \delta\lambda^a(t) = - \varepsilon p^a(t)\,.
\end{equation}
Eliminating $p$ via its own equations of motion one obtains SUSY transformations $\eqref{N=1 SUSY QM component SUSY transform}$. Finally, restricting AKSZ action \eqref{N=1-D=1-SUGRA-backgorund-worldline} to this background gives the standard 1st order reformulation of the superparticle action \eqref{N=1 SUSY QM component action}:
\begin{equation}\label{N=1-SQM-First-order-action}
    S[x,p,\lambda] = \int_{\mathbb{R}^{1|0}}dt\left((p\cdot\dot{x}) + \frac{1}{2}(\lambda \cdot \dot{\lambda}) - \frac{p^2}{2} \right).
\end{equation}

\subsection{Superfield formulation at the level of EOM from AKSZ}

We now employ the source space extension described in Section~\bref{sec:source-ext-eom}. Namely we replace $T[1]\fR^1$
with $T[1]X=T[1]\mathbb{R}^{1|1}=T[1]\fR^1 \times T[1](\Pi\fR^1)$. Denoting the coordinates on $\fR^{1|1}$ by $t,\eta$ and their associated fiber coordinates by $\atheta$ and $\astheta$ respectively the de Rham differential $\dx$ takes the form:
\begin{equation}
    \dx = \atheta\frac{\partial}{\partial t} + \astheta\frac{\partial}{\partial \stheta}\,.
\end{equation}
Let us also list the ghost degree and the fermionic degree of the coordinates:
\begin{equation}
\begin{gathered}
    \gh{t} = 0, \quad \epsilon(t) = 0, \qquad 
    \gh{\atheta} = 1, \quad \epsilon(\atheta) = 0,\\
    \gh{\stheta} = 0, \quad \epsilon(\stheta) = 1,\qquad 
    \gh{\astheta} = 1, \quad \epsilon(\astheta) = 1\,.
\end{gathered}
\end{equation}
For the future convenience we perform the following change of coordinates:
\begin{equation}\label{Superworldline new coordinates}
    t\to t, \quad \atheta\to \atheta+\stheta\astheta, \quad \stheta\to\stheta,\quad \astheta\to \astheta\,,
\end{equation}
so that in the new coordinates $\dx$ takes the following form:
\begin{equation}
\label{dxsusy}
    \dx=\atheta\partial_t +\astheta\D+\astheta^2 \partial_\atheta,
\end{equation}
where $\D$ is the chiral derivative defined in \eqref{Chiral derivative}. Note that the superspace analog of the
background solution \eqref{N=1-D=1-SUGRA-backgorund-worldline} in these coordinates read as:
\begin{equation}
\label{susy-background}
\sigma^*(\xi)=\theta, \qquad \sigma^*(\psi)=\astheta\,.
\end{equation}
Indeed, it solves the equations of motion in the sector of 
$\xi,\psi$-variables and has a
geometrical meaning of a Cartan connection 1-form describing $\mathbb{R}^{1|1}$ as a coset of the respective supergroup.

It is easy to see that $(\algA,\delta)$, where $\algA=\cC^\infty(T[1]\mathbb{R}^{1|1})$ and $\delta=\dx$, can be equivalently reduced to $(\tilde\algA,\tilde\delta)$ by employing the procedure explained in Section~\bref{sec:generalAKSZ}. Indeed, taking as a degree homogeneity in $\theta$ one finds that the lowest degree term in $\delta$ is $\delta_{-1}=\astheta^2 \partial_\atheta$ and hence $(\algA,\delta)$ equivalently reduces to the cohomology of $\delta_{-1}$. The cohomology can be explicitly identified as functions in $t,\eta,\astheta$ modulo the relation $\astheta^2=0$ and the induced differential is given by   $\tilde\delta=\astheta \D$. 


In terms of the reduced system the background solution \eqref{susy-background} takes the form:
\begin{equation}\label{N=1-D=1-SUGRA-backgorund-reduced-superworldline}
    \sigma^\ast(\xi) = 0, \quad \sigma^\ast(\psi) = \astheta\,.
\end{equation}
As before, we also set to zero ghost momenta $\mathcal{P}_\xi, \mathcal{P}_\psi$. The remainning freedom in the ghost degree preserving supermaps $\sigma^\ast: C^\infty(F)\to \tilde\algA$ is parameterized by:
\begin{equation}\label{N=1 D=1 SUSY QM supermaps reduced}
\begin{gathered}
    \sigma^\ast(x^a) = X^a(t,\stheta), \quad  \sigma^\ast(p^a) = P^a(t,\stheta), \quad \sigma^\ast(\lambda^a) = \Lambda^a(t,\stheta), \,.
\end{gathered}
\end{equation}
 The equations of motion are the $Q$-map condition $\dx\circ \sigma^*=\sigma^*\circ Q$ applied to the remaining coordinates $x,p,\lambda$ and are given by:
\begin{equation}
\begin{gathered}
    \astheta\D X^a(t,\stheta) = \astheta\Lambda^a(t,\stheta),\\
    \astheta \D \Lambda^a(t,\stheta) = -\astheta P^a(t,\stheta),\\
    \astheta\D P^a(t,\stheta) =0\,.
\end{gathered}
\end{equation}
They can be equivalently written as  
\begin{equation}\label{N=1 SUSY QM AKSZ Superspace EOM}
    \D \dot{X}^a(t,\stheta) = 0, \quad  \Lambda^a(t,\stheta) = \D X^a(t,\stheta), \quad P^a = -\D \Lambda^a(t,\stheta)\,.
\end{equation}
The first equation is nothing but the equation of motion of the superfield action~\eqref{N=1 SUSY QM superspace action} while the second and third express $P,\Lambda$ in terms of $X$ so that $P,\Lambda$ are the auxiliary superfields.

Let us finally check that the residual gauge transformation indeed reproduce susy transformations. Recall that gauge parameters are the ghost degree $-1$ vertical vector fields on $T[1]X\times F$ and hence they can be written as $Y = a(t,\stheta)\frac{\partial}{\partial \xi}+\alpha(t,\stheta)\frac{\partial}{\partial\psi}$. Demanding the corresponding gauge transformation $\delta\sigma=\commut{\dx+Q}{Y}$ to preserve background \eqref{N=1-D=1-SUGRA-backgorund-reduced-superworldline} one gets
\begin{equation}
    a(t,\stheta) = b -2 \stheta\beta \quad \alpha(t,\stheta) = \beta,
\end{equation}
where $b$ (resp. $\beta$) is the constant bosonic (resp. fermionic) parameter. Focusing on the fermionic symmetry and hence setting $b=0$ one finds:
\begin{equation}
    \delta X^a(t,\stheta) = -2\stheta\beta P^a(t,\stheta) +\beta \Lambda^a(t,\stheta), \quad \delta P^a(t,\stheta)=0, \quad \delta \Lambda^a(t,\stheta) = - \beta P^a(t,\stheta).
\end{equation}
Finaly, using \eqref{N=1 SUSY QM AKSZ Superspace EOM} one gets
\begin{equation}
    \delta X^a(t,\stheta) = \beta (\partial_\stheta + \stheta\partial_t) X^a(t,\stheta)\,,
\end{equation}
which is exactly \eqref{N=1 SUSY QM superspace SUSY transform}.

\subsection{Lagrangian superfield formulation via AKSZ}

As we discussed in Section~\bref{sec:generalAKSZ} there can be two equivalent ways to define Lagrangian AKSZ formulation in the case where the source space doesn't have a natural volume form. It turns out that both work perfectly for the toy-model in question.

The first approach is to equivalently reduce the source space to a DGCA $(\tilde\algA,\tilde\delta)$ that admits an invariant trace. In the case at hand this is precisely the reduced algebra $\tilde\algA=H^\bullet(\delta_{-1})$ introduced in the previous subsection. The induced differential therein is $\tilde\delta=\astheta \D$. Identifying $\tilde\algA$ as a tensor product of $\cC^\infty(X)$, $X=\fR^{1|1}$ and the two-dimensional algebra $\alga=\mathbb{R}[\astheta]/(\astheta^2)$ there is a natural trace operation of ghost degree $-1$ given by the tensor product of $\int dt d\eta$ on $\cC^\infty(\fR^{1|1})$ and $\tr_0$ on $\alga$ defined by
\begin{equation}\label{Reduced superworldline trace}
    \tr_0(1) = 0, \quad \tr_0(\astheta) = 1.
\end{equation}
Equivalently:
\begin{equation}
\tr\left( f_0(t,\eta)+\astheta f_1(t,\eta)\right)=\int dt d\eta f_1(t,\eta) \,.
\end{equation}
It easy to see that it is compatible with $\tilde\delta$ in the sense that 
\begin{equation}
    \int dt d \stheta Tr_0(\tilde\delta f(t,\stheta,\astheta)) = 0
\end{equation}
modulo boundary terms. Let us note that the analogous approach was proposed in \cite{Alkalaev:2018bqe}, see also~\cite{Chekmenev2021},
to define a correct inner product in the BFV-BRST quantization of systems with fermionic constraints. In fact in the simplest case of $N=1$ spinning particle model the inner product is precisely the above trace tensored 
with the usual inner-product for spinors.

Employing notations~\eqref{N=1 D=1 SUSY QM supermaps reduced} for the ghost degree preserving supermaps and setting $\sigma^*(\xi),\sigma^*(\psi)$ to their background values \eqref{N=1-D=1-SUGRA-backgorund-reduced-superworldline} the AKSZ-like action takes the form:
\begin{equation}\label{N=1-SQM-First-order-superfield-action}
    S[X,P,\Lambda] =\int dtd\eta \left((P\cdot\D X) - \frac{1}{2}(\Lambda \cdot \D \Lambda) - (P\cdot\Lambda)\right).
\end{equation}
Eliminating $P$ and $\Lambda$ from \eqref{N=1-SQM-First-order-superfield-action} using their own equations of motion
\begin{equation}
    \frac{\delta S}{\delta P^a(t,\stheta)}= \D X^a - \Lambda^a =0, \quad  \frac{\delta S}{\delta \Lambda^a(t,\stheta)} = -\D \Lambda^a -P^a = 0.
\end{equation}
one gets
\begin{equation}
    S = \frac{1}{2}\int_{\mathbb{R}^{1|1}} dtd\stheta(\dot{X}\cdot\D X), \quad \delta X^a(t,\eta) = \beta (\partial_\stheta + \stheta\partial_t) X^a(t,\stheta)
\end{equation}
which coincides with \eqref{N=1 SUSY QM superspace action} up to an overall sign. 

To obtain the full scale BV-formulation one needs to pass to supermaps $\sigmahat^\ast : C^\infty(F)\to \tilde{\mathcal{A}}$, i.e.
\begin{equation}
\begin{gathered}
    \sigmahat^\ast(x^a) = X^a(t,\stheta)+\astheta\overset{1}{X^a}(t,\stheta),\qquad \sigmahat^\ast(p^a) = P^a(t,\stheta)+\astheta\overset{1}{P^a}(t,\stheta), \\ \sigmahat^\ast(\lambda^a) = \Lambda^a(t,\stheta)+\astheta\overset{1}{\Lambda^a}(t,\stheta).
\end{gathered}
\end{equation}
The induced BV symplectic form read as 
\begin{equation}
    \omega_{BV} = \int dt d\eta \left(d\overset{1}{P_a}(t,\stheta)d X^a(t,\stheta) + d P^a(t,\stheta) d \overset{1}{X_a} - d\overset{1}{\Lambda_a}(t,\stheta)d\Lambda^a(t,\stheta)\right)\,.
\end{equation}
It follows the BV symplectic structure is nondegenerate and coincides with that of the usual BV extension of the superfield action~\eqref{N=1-SQM-First-order-superfield-action}.

Let us now turn to the second approach.  We saw in Section~\bref{sec:generalAKSZ} that instead of reducing $(\algA,\delta)$ in the first place, one can alternatively introduce a degenerate trace on $\algA$ in such a way that the induced presymplectic structure on the space of AKSZ fields is degenerate and its symplectic reduction gives the correct BV description. In our toy-model example this is easily done by defining the ghost degree $-1$ trace on $\algA$ as follows
\begin{equation}
    \tr_0(f(\atheta, \astheta)) = \int d\theta d\astheta f(\atheta, \astheta) \mathbb{Y}^{0|1}\,, \qquad \mathbb{Y}^{0|1} = \atheta \delta'(\zeta)\,,
\end{equation}
where  $\delta^\prime(\zeta)$ denotes the derivative of the Dirac delta-function.\footnote{The derivative of the Dirac delta function can be defined in purely algebraic terms by requiring $\int d \astheta \delta^\prime(\astheta)f(\astheta)=-f^\prime(0)$.} Note that 
$\mathbb{Y}^{0|1} = \atheta \delta'(\zeta)$ is precisely the picture changing operator from \cite{Castellani:2017ycm}. It is easy to see, that $\tr=\int dt d\eta \tensor \tr_0$ is compatible with $\dx$ and that all basis elements proportional to $\atheta$ or $\astheta^2$ are in the kernel.  Moreover, defining the presymplectic structure on the space of supermaps from $T[1]\fR^{1|1}$ to $F$ using the above trace one finds that all the fields associated to the basis elements proportional to $\atheta$ or $\astheta^2$ are in the kernel of the presymplectic structure and hence taking the symplectic quotient gives the same reduced space of supermaps as the reduction of the source to $(\tilde\algA,\tilde\delta)$. This gives an example of the presymplectic BV-AKSZ system where the presymplectic structure in the target space is nondegenerate while the induced presymplectic structure on the space of supermaps is degenerate thanks to the degenerate trace.


\subsection*{Acknowledgments}

We are grateful to  Pietro Grassi for collaboration on the initial stage of this project. Discussions with Ivan Dneprov are gratefully acknowledged.  MG also wishes to thank Nicolas Boulanger, Martin Cederwall, Alexey Kotov, Vasileios Letsios, Thomas Strobl and  Maxim Zabzine for useful exchanges. The work of AM was supported by the BASIS foundation scholarship.

\begin{appendix}
\section{Conventions and useful formulas}
\label{sec:app-conventions}
Symmetrization and antisymmetrization are defined with the factor of the inverse factorial. Minkowski metric and Levi-Civita tensor are 
\begin{equation}
    \eta_{ab} = diag(1,-1,-1,-1), \quad \epsilon_{0123} =1, \quad \epsilon^{0123}=-1.
\end{equation}
Grassmann parity of differential forms on a graded supermanifolds 
and their graded commutativity relations are defined as
\begin{equation}
    \p{a} = \gh{a} + \epsilon(a) + \mathrm{fdeg}(a) \enspace \mathrm{mod}\,\, 2, \quad ab = (-1)^{\p{a}\p{b}}ba,
\end{equation}
where $gh$ denotes the ghost degree, $\epsilon$ is the fermionic degree and $\mathrm{fdeg}$ is the form degree.  Interior and exterior derivatives are defined as
\begin{equation}
    d = dx^A\partial_A, \quad i_Q = Q^A\frac{\partial}{\partial dx^A}.
\end{equation}

Gamma matrices and charge conjugation matrices are:
\begin{equation}
\begin{gathered}
    \{\gamma^a,\gamma^b\} = 2\eta^{ab}, \quad 
    (\gamma^a)^\alpha_{\enspace\beta}=\begin{pmatrix}
        0 & \sigma^a\\
        \sigmabar^a & 0
    \end{pmatrix}, \\
    \gamma_{ab} = \gamma_{[a}\gamma_{b]}, \quad \gamma_5 = i\gamma^0\gamma^1 \gamma^2\gamma^3, \quad C_{\alpha\beta} = \begin{pmatrix}
        i\sigma^2 & 0\\
        0 & -i\sigma^2
    \end{pmatrix}.
\end{gathered}
\end{equation}

Indices of product of gamma matrices $\Gamma$ and spinors $\psi$ are lowered and raised as follows:
\begin{equation}
    \Gamma_{\alpha\beta} = C_{\alpha\gamma}\Gamma^\gamma_{\enspace\beta}, \quad \Gamma^{\alpha\beta} = C^{\beta\gamma}\Gamma^\alpha_{\enspace\gamma}, \quad \psi_\alpha = \psi^\beta C_{\beta\alpha}.
\end{equation}
For the Dirac conjugate of Majorana spinor $\psibar = \psi^\dagger\gamma^0 = \psi^T C$ we omit bar, if its index is written explicitly, i.e. $\psibar_\alpha\equiv \psi_\alpha$.

Properties of gamma matrices and their products:
\begin{equation}\label{Gamma matrix symmetry properties}
\begin{gathered}
    (\gamma^a)_{\alpha\beta} =  (\gamma^a)_{(\alpha\beta)}, \quad  (\gamma^5)_{\alpha\beta}=(\gamma^5)_{[\alpha\beta]}, \quad (\gamma^a\gamma^5)_{\alpha\beta}=(\gamma^a\gamma^5)_{[\alpha\beta]}, \quad (\gamma^{ab})_{\alpha\beta} =  (\gamma^{ab})_{(\alpha\beta)},\\
    (\gamma_a\gamma_{bc})_{\alpha\beta} = 2(\eta_{a[b}\gamma_{c]})_{(\alpha\beta)} + i \epsilon_{abcd}(\gamma_5\gamma^d)_{[\alpha\beta]}.
\end{gathered}
\end{equation}
Fierz identity reads as:
\begin{equation}\label{Fierz identity}
    \gamma_a\psi \psibar \gamma^a\psi = 0, 
\end{equation}
where $\psi$ is Majorana and $\p{\psi} = 0$.

    \section{Calculation of $i_Q \omega^F$ for supergravity}
    \label{app:iq=dH}
    Here we give detailed derivations of $i_Q \omega^F$: 
    \begin{equation}
    \begin{gathered}
        i_Q \omega^F = -\epsilon_{abcd}\xi^d \rho^a_{\enspace e} \xi^e d\rho^{bc}- \epsilon_{abcd}\xi^d d\xi^a \rho^b_{\enspace e} \rho^{ec} + \\
        + \frac{1}{2} \epsilon_{abcd} (\gamma^a)_{\alpha\beta} \xi^d \psi^\alpha \psi^\beta d \rho^{bc}  +i (\gamma_5\gamma_a)_{\alpha\beta}\xi^a (\gamma_{bc})^\alpha_{\enspace \gamma} \rho^{bc}\psi^\gamma d\psi^\beta  - \\
        -  \frac{i}{2}(\gamma_5\gamma_a)_{\alpha\beta} \psi^\alpha (\gamma_{cd})^\beta_{\enspace\gamma} \rho^{cd} \psi^\gamma d\xi^a - 2i (\gamma_5\gamma_a)_{\alpha\beta} \psi^\alpha d\psi^\beta (\rho_{\enspace b}^a \xi^b - \frac{1}{2} (\gamma^a)_{\gamma\delta} \psi^\gamma \psi^\delta).
    \end{gathered}
    \end{equation}
    \begin{itemize}
        \item First and second terms are the same as in Palatini-Cartan-Weyl gravity:
        \begin{equation}
           -\epsilon_{abcd}\xi^d \rho^a_{\enspace e} \xi^e d\rho^{bc}- \epsilon_{abcd}\xi^d d\xi^a \rho^b_{\enspace e} \rho^{ec} = -d\left(\frac{1}{2}\epsilon_{abcd}\xi^c \xi^d \rho^a_{\enspace e} \rho^{eb}\right)
        \end{equation}
        \item Using the last formula of \eqref{Gamma matrix symmetry properties}, the third and the fourth terms can be rewritten as:
        \begin{equation}
            \frac{1}{2} \epsilon_{abcd}(\gamma^a)_{\alpha\beta}\xi^d \psi^\alpha \psi^\beta d \rho^{bc} = \frac{i}{2}\psi^\alpha(\gamma_5 \gamma_a \gamma_{bc})_{\alpha\beta}  d\rho^{bc} \psi^\beta \xi^a,
        \end{equation}
        \begin{equation} \label{4 summand}
           i (\gamma_5\gamma_a)_{\alpha\beta}\xi^a (\gamma_{bc})^\alpha_{\enspace \gamma} \rho^{bc}\psi^\gamma d\psi^\beta =  id \psi^\alpha(\gamma_5 \gamma_a \gamma_{bc})_{(\alpha\beta)} \rho^{bc} \psi^\beta \xi^a  -2i (\gamma_5\gamma_a)_{\alpha\beta} \rho^a_{\enspace b}\xi^b d\psi^\alpha \psi^\beta.
        \end{equation}
        \item After relabeling indices and changing order of variables the fifth term takes the following form:
        \begin{equation}
            -\frac{i}{2}(\gamma_5\gamma_a)_{\alpha\beta} \psi^\alpha (\gamma_{cd})^\beta_{\enspace\gamma} \rho^{cd} \psi^\gamma d\xi^a = -\frac{i}{2}\psi^\alpha(\gamma_5\gamma_a\gamma_{bc})_{\alpha\beta} \rho^{bc}\psi^\beta d\xi^a
        \end{equation}
        \item  The sixth term can be rewritten as:
        \begin{equation}
            -2i(\gamma_5 \gamma_a)_{\alpha \beta} \psi^\alpha d\psi^\beta \rho^a_{\enspace b} \xi^b = 2i(\gamma_5\gamma_a)_{\alpha\beta} \rho^a_{\enspace b}\xi^b d\psi^\alpha \psi^\beta\,,
        \end{equation}
        so that it cancels one of the terms in \eqref{4 summand}.
        \item The last term vanishes thanks to Fierz identity \eqref{Fierz identity}.
    \end{itemize}
    
    Collecting summands and taking into account $\psi^\alpha \psi^\beta = \psi^{(\alpha} \psi^{\beta)}$ we get the result:
    \begin{equation}
        \begin{gathered}
        i_Q \omega^F = -d\left(\frac{1}{2}\epsilon_{abcd}\xi^c \xi^d \rho^a_{\enspace e} \rho^{eb}\right) + i(\gamma_5 \gamma_a \gamma_{bc})_{(\alpha\beta)} \left(\frac{1}{2}  \psi^\alpha d\rho^{bc}  \psi^\beta \xi^a +d\psi^\alpha  \rho^{bc}  \psi^\beta \xi^a- \frac{1}{2}\psi^\alpha \rho^{bc}  \psi^\beta d\xi^a\right) = \\
        =- d \left(\frac{1}{2}\epsilon_{abcd}\xi^c \xi^d \rho^a_{\enspace e} \rho^{eb} - \frac{i}{2} \psi^\alpha (\gamma_5 \gamma_a \gamma_{bc})_{\alpha\beta} \rho^{bc} \psi^\beta \xi^a\right).
        \end{gathered}
    \end{equation}

\section{Kernel of presymplectic form of N=1 D=4 Supergravity}\label{app:kernel of presympelctic form}
Here we introduce two-component spinor notations for $F$ and $\bar F$ and give a proof of Lemma~\bref{lemma:kernel}. For two-component spinor parametrization of spin-tensors in 4 dimensions we use the conventions from~\cite{huggett1994introduction}.

It is convenient to introduce the following basis in the Grassmann algebra generated by  $\xi$:
\begin{equation}
\begin{gathered}
    \xi_{A\alphadot} = (\sigma^a)_{A\alphadot}\xi_a, \quad \xi_{(A\B)} = \frac{i}{2}(\sigma_{ab})_{A\B}\xi^a\xi^b, \quad \xibar_{(\alphadot\betadot)} = -\frac{i}{2}(\sigmabar_{ab})_{\alphadot\betadot}\xi^a\xi^b,\\
    \xitilde_{A\alphadot} = (\sigma^{a})_{A\alphadot} \mathcal{V}_a(\xi), \quad \xitilde = \mathcal{V}(\xi).
\end{gathered}
\end{equation}
The nonvanishing products in this algebra are :
\begin{equation}
\begin{gathered}
    \xi_{A\alphadot}\xi_{\B \betadot} =  \xi_{A\B} \epsilon_{\alphadot\betadot}+\epsilon_{A\B}\xibar_{\alphadot\betadot}, \qquad  \xi_{A\alphadot}\xibar_{\betadot\gammadot} =  2i\xitilde_{A (\betadot}\epsilon_{\gammadot)\alphadot},\\
     \xi_{A\alphadot} \xi_{\B\C} = -2i\xitilde_{\alphadot(\B}\epsilon_{\C)A}, \qquad \xi_{A\alphadot}\xitilde_{\B\betadot}=2\epsilon_{A\B}\epsilon_{\alphadot\betadot}\xitilde,\\
    \xi_{A\B}\xi_{\C D} = 2i\xitilde(\epsilon_{A\C}\epsilon_{\B D}+\epsilon_{A D}\epsilon_{\B\C}), \qquad \xibar_{\alphadot\betadot}\xibar_{\gammadot\deltadot} = -2i\xitilde(\epsilon_{\alphadot\gammadot}\epsilon_{\betadot\deltadot}+\epsilon_{\alphadot\deltadot}\epsilon_{\betadot\gammadot}).\\
\end{gathered}
\end{equation}
Coordinates on $F=\algg[1]$ associated with Lorentz and susy transforsormations are parameterised as:
\begin{equation}
    \psi^\alpha=\begin{pmatrix}
    \psi_A\\
    \psibar^\alphadot\
\end{pmatrix},\quad \rho^{ab} = \frac{i}{4}\left(\rho_{A\B}(\sigma^{ab})^{A\B}-\rhobar_{\alphadot\betadot}(\sigmabar^{ab})^{\alphadot\betadot}\right).
\end{equation}

In terms of the new parametrization the presympelctic form can be written as:
\begin{equation}
\begin{gathered}
    \omega^F = \omega^F_0 + \omega^F_1\,,\\
    \omega^F_0=\frac{i}{2}\left(\xi_A^\gammadot d\xi_{\B\gammadot}d\rho^{A\B} - \xi_\alphadot^\C d\xi_{\betadot\C}d\rhobar^{\alphadot\betadot}\right)- 4id\psibar^\alphadot d\psi^A \xi_{A\alphadot} \,,\\
    \omega^F_1=-2i (d\psibar^\alphadot \psi^{A} - \psibar^\alphadot d\psi^A) d\xi_{A\alphadot},
\end{gathered}
\end{equation}
where the first term in $\omega^F_0$ is the presymplectic form for Palatini-Cartan-Weyl gravity while the remaining terms in $\omega^F$ is the presymplectic form in the gravitino sector decomposed into the homogeneous in $\psi$ components.

The spinor coordinates on $\bar F=Smaps(T[1]\mathbb{R}^{4|0},F)$ are introduced as follows:
\begin{equation}
    \sigmahat^\ast(\Psi^I) = \Psi^I +\Psi^I{}_{{A\alphadot}}\theta^{A\alphadot} + \overset{2}{\Psi}{}^I{}_{{A\B}}\theta^{A\B}+\overset{2}{\Psi}{}^I{}_{{\alphadot\betadot}}\thetabar^{\alphadot\betadot} + \overset{3}{\Psi}{}^I{}_{{A\alphadot}}\thetatilde^{A\alphadot} + \overset{4}{\Psi}{}^I\thetatilde,
\end{equation}
where $\Psi^I$ is a collective notation of coordinates on $F$. The induced presymplectic form $\bar\omega$ evaluated at $p\in\textrm{body}(\bar F)$ reads as
\begin{multline}
    \bar\omega_p = \bar\omega^{PCW}_p- 8i(d\psi^A d\overset{3}{\psibar^{\alphadot}}_{|A\alphadot}-d\psibar^\alphadot d\overset{3}{\psi^A}_{|A\alphadot}) -
    16 (d\psi_A^{\enspace|A\betadot}d\overset{2}{\psibar^\alphadot}_{|\alphadot\betadot}+d\psibar_\alphadot^{\enspace|\alphadot\B}d\overset{2}{\psi^A}_{|A\B}) 
    + \\
    +16( d\psi^{(A|\B)\alphadot}d\overset{2}{\psibar}_{\alphadot|{A\B}}+d\psibar^{(\alphadot|\betadot)A}d\overset{2}{\psi}_{A|\alphadot\betadot}).
\end{multline}
To give a proof of the Lemma~\bref{lemma:kernel} we need the explicit form of the vector fields on $F$ belonging to the kernel of $\omega^F_0$ and whose prolongation to $\bar F$ restricted to $p$ generate $\bar\cK_p \in T_p\bar F$. In the sector of $\xi^a$ variables these can be taken as:
\begin{equation}
\label{xi-vf}
    X_{A\alphadot} = \xitilde \frac{\partial}{\partial \xi^{A\alphadot}}, \qquad 
    X_3= 
    \xitilde^{\C\gammadot}\frac{\partial}{\partial \xi^{\C\gammadot}}, \qquad  X_{(A\B)(\alphadot\betadot)} = \xitilde_{A \alphadot}\frac{\partial}{\partial \xi^{\betadot\B}}+\dots,
\end{equation}
where we have singled out the irreducible components and $\dots$ denote the appropriate symmetrisation. We omit the analysis of the vector fields in $\rho$ sector as they do not require corrections.

For $\psi$-sector one can check that the kernel of $\bar\omega$ at point $p$ is given by the prolongation of the following vector fields:
\begin{equation}
\label{psi-vf}
\begin{gathered}
    Y_{(A\B\C)} = \xi_{A\B}\frac{\partial}{\partial \psi^\C} + \dots , \qquad \overbar{Y}_{(\alphadot\betadot\gammadot)}=\xibar_{\alphadot\betadot}\frac{\partial}{\partial \psi^\gammadot}  + \dots,\\
    Y_{(A\B)\alphadot} = \xitilde_{A\alphadot}\frac{\partial}{\partial \psi^\B}+ \dots, \qquad \overbar{Y}_{(\alphadot\betadot)A} = \xitilde_{A\alphadot}\frac{\partial}{\partial \psibar^\betadot}+ \dots,\\
    Y_A = \xitilde\frac{\partial}{\partial\psi^A}, \qquad   \overbar{Y}_\alphadot = \xitilde\frac{\partial}{\partial\psibar^\alphadot},
\end{gathered}
\end{equation}
These vector fields belong to the kernel of $\omega^F_0$.

To prove Lemma~\bref{lemma:kernel} it is enough to show that vector fields~\eqref{xi-vf}
and \eqref{psi-vf} can be modified by terms proportional to $\psi$ in such a way that the resulting vector fields belong to the kernel of the $\omega^F$. Indeed, vector fields of the form $V^{ab}\dl{\rho^{ab}}$ and belonging to the kernel of $\omega^F_0$ are automatically in the kernel of $\omega^F$ and hence there is no need for modifications.

In the $\xi$-sector the modified  vector fields are given by:
\begin{equation}
\begin{gathered}
    \tilde{X}_{A\alphadot} = \xitilde \frac{\partial}{\partial \xi^{A\alphadot}} + \frac{1}{4}\xitilde_{A\alphadot}\left(\psi^\B\frac{\partial}{\partial \psi^\B}+\psibar^\betadot\frac{\partial}{\partial \psibar^\betadot}\right),
    \\
    \tilde{X}_3 = \xitilde^{\C\gammadot}\frac{\partial}{\partial \xi^{\C\gammadot}} -\frac{i}{6}\psibar_\gammadot\xibar^{\gammadot\deltadot}\frac{\partial}{\partial \psibar^\deltadot} +  \frac{i}{6}\psi_\C\xi^{\C D}\frac{\partial}{\partial \psi^D} + \\
    +\frac{2i}{3}\psi_\C\psibar^\gammadot\xi_{D\gammadot}\frac{\partial}{\partial \rho_{\C D}}- \frac{2i}{3}\psibar_\gammadot\psi^\C\xi_{\deltadot\C}\frac{\partial}{\partial \rhobar_{\gammadot\deltadot}},\\
    \tilde{X}_{(A\B) (\alphadot\betadot)} = \xitilde_{A \alphadot}\frac{\partial}{\partial \xi^{\betadot\B}} - \frac{i}{4}\xi_{A\B}\psibar_\alphadot\frac{\partial}{\partial \psibar^\betadot} + \frac{i}{4}\xibar_{\alphadot\betadot}\psi_A\frac{\partial}{\partial \psi^\B}+
    \\
    +i\psibar_\alphadot\psi^\C\xi_{\betadot\C}\frac{\partial}{\partial \rho^{A\B}} - i\psi_A\psibar^\gammadot\xi_{\B\gammadot}\frac{\partial}{\partial \rhobar^{\alphadot\betadot}}+\dots,
\end{gathered}
\end{equation}
where $\dots$ denote appropriate symmetrisation. In the $\psi$-sector the necessary modification of \eqref{psi-vf} reads as:
\begin{equation}\label{Supergravity kernel psi-sector corrected}
\begin{gathered}
 \tilde{Y}_{(A\B\C)} = \xi_{A\B}\frac{\partial}{\partial \psi^\C} + 4\psibar^\betadot\xi_{\betadot\C}\frac{\partial}{\partial \rho^{A\B}} +\dots , \\
 \tilde{\overbar{Y}}_{(\alphadot\betadot\gammadot)}=\xibar_{\alphadot\betadot}\frac{\partial}{\partial \psibar^\gammadot}  +4\psi^\B\xi_{\B\gammadot}\frac{\partial}{\partial \rhobar^{\alphadot\betadot}}+ \dots,\\
 \tilde{Y}_{(A\B)\alphadot} = \xitilde_{A\alphadot}\frac{\partial}{\partial \psi^\B}-2i\psibar^\gammadot\xibar_{\gammadot\alphadot}\frac{\partial}{\partial \rho^{A\B}}-i \psibar_\alphadot \xi^\gamma_A \frac{\partial}{\partial \rho^{\C\B}}+ \dots, \\
 \tilde{\overbar{Y}}_{(\alphadot\betadot)A} = \xitilde_{A\alphadot}\frac{\partial}{\partial \psibar^\betadot} +2i\psi^\C\xi_{\C A}\frac{\partial}{\partial \rhobar^{\alphadot\betadot}}+i \psi_A \xibar^\gammadot_\alphadot \frac{\partial}{\partial \rhobar^{\gammadot\betadot}}+\dots,\\
 \tilde{Y}_A = \xitilde\frac{\partial}{\partial\psi^A} -\frac{4}{3}\psibar^\betadot\xitilde_\betadot^\B\frac{\partial}{\partial \rho^{A\B}}, \quad   \tilde{\overbar{Y}}_\alphadot = \xitilde\frac{\partial}{\partial\psibar^\alphadot} -\frac{4}{3}\psi^\B\xitilde_\B^\betadot\frac{\partial}{\partial \rhobar^{\alphadot\betadot}},
\end{gathered}
\end{equation}
where  $\dots$ denote appropriate symmetrization. This completes the proof of Lemma \bref{lemma:kernel} as prolongations of the above vector fields generate the required distribution $\bar\cK^\prime$.
\end{appendix}

\providecommand{\href}[2]{#2}\begingroup\raggedright\endgroup

\end{document}